\definecolor{mainblue}{HTML}{00A388}
\definecolor{mainred}{HTML}{DC3522}
\definecolor{mainpurple}{HTML}{9035B3}
\newtheorem{theorem}{Theorem}
\newtheorem{definition}{Definition}
\newtheorem{proposition}[theorem]{Proposition}
\newtheorem{corollary}[theorem]{Corollary}
\providecommand{\customgenericname}{}
\newcommand{\newcustomtheorem}[2]{
  \newenvironment{#1}[1]
  {
   \renewcommand\customgenericname{#2}
   \renewcommand\theinnercustomgeneric{##1}
   \innercustomgeneric
  }
  {\endinnercustomgeneric}
}
\newcommand*{\conj}{\ensuremath{^{*}}}
\newcommand*{\dagg}{\ensuremath{^{\dagger}}}
\newcommand{\kdlp}{\ensuremath{k_\mathrm{DLP}}}
\DeclareMathOperator{\Tr}{tr}
\DeclareMathOperator{\Herm}{Herm}
\DeclareMathOperator*{\var}{Var}
\DeclareMathOperator{\poly}{poly}
\DeclareMathOperator{\BQP}{\mathsf{BQP}}
\DeclareMathOperator{\Heurpoly}{\mathsf{HeurBPP/poly}}
\renewcommand{\exp}{\ensuremath{\mathrm{exp}}}
\providecommand{\hR}{\ensuremath{\hat{R}}}
\providecommand{\tU}{\ensuremath{\Tilde{U}}}
\providecommand{\to}{\ensuremath{\Tilde{o}}}
\providecommand{\calA}{\ensuremath{\mathcal{A}}}
\providecommand{\calD}{\ensuremath{\mathcal{D}}}
\providecommand{\calF}{\ensuremath{\mathcal{F}}}
\providecommand{\calH}{\ensuremath{\mathcal{H}}}
\providecommand{\calM}{\ensuremath{\mathcal{M}}}
\providecommand{\calO}{\ensuremath{\mathcal{O}}}
\providecommand{\calP}{\ensuremath{\mathcal{P}}}
\providecommand{\calU}{\ensuremath{\mathcal{U}}}
\providecommand{\calX}{\ensuremath{\mathcal{X}}}
\providecommand{\calY}{\ensuremath{\mathcal{Y}}}
\DeclareMathOperator*{\bbE}{\ensuremath{\mathbb{E}}}
\providecommand{\bbI}{\ensuremath{\mathbb{I}}}
\providecommand{\bbN}{\ensuremath{\mathbb{N}}}
\providecommand{\bbP}{\ensuremath{\mathbb{P}}}
\providecommand{\bbR}{\ensuremath{\mathbb{R}}}
\providecommand{\bbZ}{\ensuremath{\mathbb{Z}}}
\title{On the Relation between Trainability and Dequantization of Variational Quantum Learning Models}
\author{
Elies Gil-Fuster$^{\ast,1,2}$ \quad Casper Gyurik$^{3}$ \quad Adrián Pérez-Salinas$^{3,4}$ \quad Vedran Dunjko$^{3,5}$ \\
$^1$ Dahlem Center for Complex Quantum Systems, Freie Universität Berlin, 14195 Berlin, Germany \\
$^2$ Fraunhofer Heinrich Hertz Institute, 10587 Berlin, Germany \\
$^3$ $\langle aQa^L\rangle$ Applied Quantum Algorithms, Universiteit Leiden \\
$^4$ Lorentz Instituut, Universiteit Leiden, Niels Bohrweg 2, 2333 CA Leiden, Netherlands \\
$^5$ LIACS, Universiteit Leiden, Niels Bohrweg 1, 2333 CA Leiden, Netherlands \\
$^\ast$ \texttt{emgilfuster@gmail.com}
}
\begin{document}

\maketitle

\begin{abstract}
Quantum machine learning (QML) explores the potential advantages of quantum computers for machine learning tasks, with variational QML among the main current approaches.
While quantum computers promise to solve problems that are classically intractable, it has been recently shown that a particular quantum algorithm which outperforms all pre-existing classical algorithms can be matched by a newly developed classical approach (often inspired by the quantum algorithm).
We say such algorithms have been dequantized.
For QML models to be effective, they must be trainable and non-dequantizable.
The relationship between these properties is still not fully understood and recent works raised into question to what extent we could ever have QML models which are both trainable and non-dequantizable.
This challenges the potential of QML altogether.
In this work we answer open questions regarding when trainability and non-dequantization are compatible.
We first formalize the key concepts and put them in the context of prior research.
We introduce the role of ``variationalness" of QML models using well-known quantum circuit architectures as leading examples.
Our results provide recipes for variational QML models that are trainable and non-dequantizable.
By ensuring that variational QML models are both trainable and non-dequantizable, we pave the way toward practical relevance.
\end{abstract}

\section{Introduction}\label{s:intro}

    Machine Learning (ML) algorithms have been extensively researched and developed, resulting in numerous successful applications across various domains.
    Recently, there has been growing interest in Quantum Machine Learning (QML) \citep{schuld2021machine, cerezo2021variational, bharti2022noisy}, which leverages quantum computing to potentially enhance traditional ML models.
    A significant focus in QML is on so-called variational QML models, where quantum operations are tuned by an optimizer, analogously to neural networks in classical ML.
    One of the main questions in the field at the moment is \emph{how can we design good variational QML models?}

    Addressing this question involves tackling several challenges, many of which mirror those encountered in the development of Neural Networks (NNs) for classical ML.
    Key considerations include properties derived from ML theory such as \emph{expressivity} \citep{schuld2021fourier,perezsalinas2021one}, which refers to the model's ability to represent a wide range of functions; \emph{generalization} \citep{caro2022generalization, gilfuster2024understanding}, the model's capability to perform well on unseen data; and \emph{trainability} \citep{mcclean2018barren,thanasilp2023subtleties}, which concerns the ease with which a model can be optimized.
    Additionally, it is crucial to ensure both, that QML models are capable of solving \emph{practically relevant} problems; and that their performance cannot be efficiently replicated by classical algorithms \citep{Tang2019, schreiber2022classical}, a phenomenon dubbed as \emph{dequantization}.
    
    \begin{figure}
        \centering
        \includegraphics{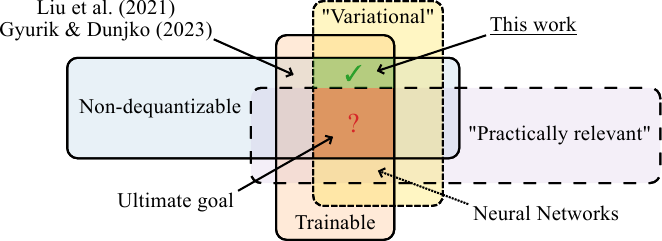}
        \caption{
            Depiction of research landscape.
            Dashed lines refer to concepts that are not well-defined.
            The dotted arrow reflects that we consider \enquote{variational} quantum machine learning models those which resemble neural networks in some sense.
            The inclusion of practical relevance remains an open topic.
        }
        \label{fig:venn}
    \end{figure}

    Before we even start to think about discussing practical relevance, the questions surrounding dequantization and trainability must be resolved.
    Hence we focus on establishing precise language for three of the categories depicted in Fig.~\ref{fig:venn}, namely: Non-dequantizable, trainable, and variational.
    The question of practical relevance may be the most interesting one, as it aligns with the search for sensible applications of quantum computing.
    However, it is also the most problematic characteristic to formally capture and treat theoretically.
    Our contributions pave the way towards eventually discussing practical relevance.
    We are guided by a strong wish to be as broad as possible while being sufficiently rigorous that mathematical proofs become possible.
    
    Throughout this work we distinguish between QML models according to their operational resemblance to NNs.
    To this end we introduce a scale of \enquote{variationalness}.
    Intuitively, what makes a model variational is a deep-layered structure (like NNs), and the trainable parameters appearing inside the model's layers.
    Variational QML models hold the promise of being broadly applicable, as well as being more efficient to train and to implement than quantum kernel methods (as is the case in classical ML).

    Our first contribution is the formal definition of trainability and dequantization, to disentangle the discussion from related proxies.
    We next explain how trainability and non-dequantization are mutually compatible for non-variational QML models, based on prior works.
    We formally introduce a measure of variationalness and prove accompanying results to justify the need for a restricted notion of trainability.
    Finally, we resolve the open question regarding trainability and dequantization in variational QML: there do exist variational QML models which are gradient-based trainable and still non-dequantizable.

    Our work provides clear outlook on the necessary future steps toward good variational QML models.

\section{Definitions}\label{s:defs}

    In terms of notation, we call $\calX$ the data domain, and $\calY$ the label co-domain.
    A learning task $(\calD,R)$ is specified by a probability distribution $\calD(\calX\times\calY)$ over inputs and labels, and a risk functional $R$.
    Given a function $f\colon\calX\to\calY$, $R(f)$ measures the quality of $f$ as a solution to the learning task.
    We denote $\calD_\calX$ the marginal of $\calD$ over the data $\calX$.
    A learning model $(\calF,\calA)$ consists of a family of functions $\calF$ from $\calX$ to $\calY$ and a learning algorithm $\calA$.
    The learning algorithm takes as input a training set $S$ of labeled data and returns a function in the family $\calA(S)\in\calF$\footnote{
        As a technicality, we define $\calF$ as the range of $\calA$ over all possible training sets, independently of their size.
    }.
    We use $N$ to denote the cardinality of $S$.
    We use $\hR_S$ for the empirical risk, which is an estimator of $R$ restricted to the training set $S$.
    We ignore the possibility of a validation set for ease of presentation.
    We denote $f^\ast_\calD\in\calF$ an optimal solution to the task $(\calD,R)$ within the model, and $f^\ast_S\in\calF$ an optimum of the empirical risk $\hR_S$ for the set $S$.
    If the function family is of the form $\calF=\{f_\vartheta\colon\calX\to\calY\,|\,\vartheta\in\Theta\}$, we talk about a parametrized learning model with parameters $\vartheta\in\Theta$.
    We use $\calP(\Theta)$ for a probability distribution over the parameters.

\subsection{Parametrized Quantum Circuits for QML}\label{ss:pqcqml}

    We assume some familiarity with quantum computing, for more background in the formalism of quantum circuits, we refer readers to Appendix~\ref{a:quantumcomputing}.
    A Parametrized Quantum Circuit (PQC) is specified by a sequence of quantum gates depending on continuous parameters \citep{benedetti2019parametrized}. 
    The action of a PQC can be seen as a three step process:
    (1) \emph{prepare} a fixed initial quantum state $\rho_0$,
    (2) transform the intial state into a different one $\rho_0\mapsto\rho$ by applying a sequence of parametrized quantum gates, and 
    (3) \emph{measure} a fixed quantum observable $\calM$.
    From PQCs we estimate the expectation value $\Tr(\rho\calM)$ \citep{nielsen2000quantum}.
    Specifically for QML, we consider PQCs of polynomial \emph{size} in the relevant scale of the learning task.
    The size of a PQC captures a combination of the number of qubits and the depth of the circuit.
    There are several approaches to designing PQCs, which make different use of the parameters of the circuit, but in every case the choice of underlying PQC is known to be critical.

    A popular approach in variational QML is to consider a deep-layered PQC, where each layer may depend on both the input data $x\in\calX$ and the trainable parameters $\vartheta\in\Theta$~\citep{perezsalinas2020reuploading}.
    This way, the PQC results in a parametrized state $\rho(x;\vartheta)$.
    A real-valued labeling function arises from the expectation value of the fixed observable $\calM$, as $f_\vartheta(x)=\Tr\{\rho(x;\vartheta)\calM\}$.
    Each such PQC then gives rise to a hypothesis family $\calF_Q=\{f_\vartheta(x)\,|\,\vartheta\in\Theta\}$.
    The resemblance between these PQCs and Neural Networks (NNs) comes from the layered appearance and the role of the trainable parameters.
        
    Quantum kernel methods represent the main alternative to variational QML.
    Quantum kernel methods rely on quantum feature maps $x\mapsto\rho(x)$ (resulting in data-dependent parametrized states) to define a quantum kernel function $k(x,x') = \Tr \{\rho(x)\rho(x')\}$ \citep{schuld2021kernels}.
    Common quantum feature maps are realized as PQCs where the parameters encode the inputs $x$, and so quantum kernel functions also give rise to PQC-based QML models.
    For the quantum kernel approach, the resulting labeling functions are of the form $f_\alpha(x) = \sum_{i=1}^N \alpha_i k(x,x_i)$, where $\{(x_i)_i\}$ is the training set of size $N$ fed to the learning algorithm, and $\alpha=(\alpha_i)_i$ is a real vector of trainable parameters.
    Quantum kernel methods thus differ from other variational QML models in that their trainable parameters are not part of the PQC from where they stem, and often quantum feature maps do not display a layered-structure.
    Formally (see Appendix~\ref{a:philosophy} for further details), the hypothesis family of such models is the set of real-valued linear maps of $\rho(x)$: $\calF_\rho=\{f_\calM(x)=\Tr(\rho(x)\calM)\,|\,\calM\in\Herm\}$, $\Herm$ being the set of Hermitian operators of the appropriate dimension, exponential in the number of qubits.
    Notice we do not restrict the size of the training set when defining the hypothesis family of a learning model.
    Also, different Hermitian matrices can give rise to the same function if $\rho(x)$ is not full rank.

\subsection{Trainability}\label{ss:trainability}

    Prior works have dealt with trainability in QML, but until now a formal definition has been lacking.
    Here we capture a few conflicting desiderata which we expose in detail in Appendix~\ref{a:philosophy}.
    \begin{definition}[Task-dependent trainability.]\label{def:trainability}
        Given a suite of learning tasks of interest $\{(\calD_i,R_i)_i\}$,
        a learning model $(\calF,\calA)$ is \emph{trainable} if the following holds for each task: the empirical risk of the hypothesis produced by the learning algorithm is close to the optimal one within the class $\hR_S(\calA(S))\approx \hR_S(f\conj_S)$ with high probability with respect to a training set $S$ of polynomial size being sampled i.i.d. according to $\calD$.
    \end{definition}

    As a remark, notice our definition of trainability involves the empirical risk and not the expected risk.
    On the one hand, the empirical risk is the quantity we have access to.
    On the other hand, statements about the expected risk would arguably not be about trainability, but rather about generalization.
    In Section~\ref{ss:train_BPs} we comment further on other similar notions.

    We do not specify a notion of closeness for the risks, as this is an issue that is usually much easier to handle on a case-by-case basis.
    The conceptual idea is that a model is trainable if the labeling function outputted by the training algorithm is approximately optimal within the hypothesis family, for a set of tasks of interest.
    A (Q)ML model can still be considered trainable even if it displays poor performance in a task it is not well-geared to solve.
    
    The discussion of useful notions of trainability does not end here, though.
    In particular, the definition allows for training algorithms $\calA$ different from the ones QML practitioners would use.
    Accordingly, it makes sense to also discuss the nature of $\calA$.
    For example, one could talk about \emph{efficient trainability} by imposing that $\calA$ must run in polynomial time in the size of the model and size of the training set.
    We could also distinguish between \emph{quantum efficient} and \emph{classical efficient} trainability.

\subsection{Barren Plateaus in Machine Learning}~\label{ss:BPs}

    The Barren Plateau (BP) phenomenon introduced in \citet{mcclean2018barren} has been taken as a proxy for trainability in optimization problems.
    Here, we adapt the definition of BPs to make sense for any parametrized (Q)ML model, in a similar way to \citet{thanasilp2023subtleties} and \citet{barthe2023gradients}.
    We propose a general definition which keeps the spirit that BPs are a concentration phenomenon over the variational parameters, independently of their roles.
    Given a parametrized (Q)ML hypothesis family $\calF$ of size $n$, with variational parameters $\vartheta\in\Theta$, and a probability distribution $\calP$ over $\Theta$, we define BPs in ML as follows:
    \begin{definition}[Barren Plateaus in Machine Learning]\label{def:bp}
        The hypothesis family $\calF\coloneqq\{f_\vartheta\,|\,\vartheta\in\Theta\}$ has a \emph{Barren Plateau} with respect to the parameter distribution $\calP$ and the data distribution $\calD$ if the function values concentrate exponentially in the following sense:
        \begin{align}
            \bbE_{x\sim\calD_\calX}\left[\var_{\vartheta\sim\calP} \left[ f_\vartheta(x)\right]\right] &\in\calO(\exp(-n)), \label{eq:BP2}
        \end{align}
        where $\var$ is the variance with respect to the parameter distribution $\calP$.
    \end{definition}
    The data distribution $\calD$ should be taken to be the one from the learning task we are tackling.
    In Appendix~\ref{a:philosophy}, we discuss the operational meaning of BPs for different kinds of ML models.

\subsection{Dequantization and Simulation}\label{ss:dequantization}

    Dequantization can be understood in a number of ways, but here we adhere to a rather general notion in which given a quantum algorithm, there is a (perhaps related) classical algorithm with matching performance:

    \begin{definition}[Dequantization in QML]\label{def:dequant}
        Given a suite of learning tasks of interest $\{(\calD_i,R_i)_i\}$, a QML model $(\calF_Q,\calA)$ is \emph{dequantizable} if there exists a classical hypothesis family $\calF_C$ and learning algorithm $\calA_C$, such that the following holds for each of the learning tasks: given a training set $S\sim\calD^N$, the performance of the classical learner is approximately at least as good as that of the quantum learner: $R(\calA_C(S)) \lesssim R(\calA(S))$.
    \end{definition}
    Although we define dequantization broadly in terms of a set of learning tasks, we wish to keep the focus on \enquote{dequantizing a model}, and not on \enquote{dequantizing a problem}, so the learning tasks of interest should be considered very general.
    Further discussion on this definition we defer to Appendix~\ref{a:philosophy}.

    One potential avenue for dequantization would be the ability to classically evaluate the functions $f_\vartheta\in\calF_Q$ in the hypothesis family:
    \begin{definition}[PQC simulation]\label{def:simul}
        A PQC-based function family $\calF_Q$ is \emph{classically simulable} if there exists a classical poly-time algorithm which approximates $f_\vartheta(x)$ to good precision given $\vartheta$ and $x$.
    \end{definition}
    The definition can be further specialized to hold either for all inputs $x$, for all parameters $\vartheta$, or with high probability under respective distributions.
    Notice this is a statement only about a function family, and not about any learning task.
    Dequantization and simulation are essentially different properties in this way.

    While we have focused on supervised learning tasks, recent milestone works have reported quantum advantage tasks involving random circuit sampling~\citep{arute2019quantum,Bluvstein2023}.
    Further, works like~\citet{sweke2021quantum} characterized the complexity of learning the distributions arising from random circuits, thus formalizing the problem as an unsupervised learning task.
    In this work we focus exclusively on supervised learning.
    An interesting future research direction is extending our formalism to also include the study of trainability and dequantization of unsupervised quantum learning models.
    
\section{Related Work}\label{s:related}

\subsection{Trainability and Barren Plateaus}\label{ss:train_BPs}

    An important question remains open in previous works regarding the relation between trainability and (absence of) BPs.
    Works like \citep{you2021exponentially,anschuetz2022beyond} show that absence of BPs is not sufficient for trainability.
    Intuition in the field arguably dictates that the converse should not be true: presence of BPs should be sufficient for non-trainability.
    Later in Section~\ref{s:results}, we show the opposite.
    We show how one can by-pass the problem of BPs with a learning algorithm that does not rely on local iterative optimization.
    
    Our definition of trainability bears some similarity to notions of \emph{agnostic PAC learning} \citep{Haussler1992decision}.
    The main difference to agnostic learning is that our definition involves the \emph{empirical risk} $\hR_S$, since in practice this is the quantity we have access to, and not the \emph{expected risk} $R$.
    Further demanding that the output should be approximately optimal with respect to the expected risk would be a statement about generalization and inductive bias \citep{kuebler2021inductive}, which is not our focus.
    
    Another similarity is to the framework of \emph{Meta-learning} \citep{schmidhuber1987evolutionary}, where a general learning model is trained to be able to later specialize and solve several different tasks.
    Although we defined trainability based on a set of tasks, we envision these to be taken to be very general while not containing all possible learning tasks, to avoid complexity theory pit-falls.
    In this sense, we consider different instances of the model being trained for each different task, and so our definition is essentially different to meta-learning.

\subsection{Dequantization and Simulation}

    Our definition of dequantization captures general classes of quantum-inspired algorithms à la~\citet{Tang2019}, as well as the recently introduced classical surrogate approaches \citep{schreiber2022classical,rudolph2023classical}.
    Our definition of simulation aligns with usual results \citep{Bravyi2016improved,dias2023classical}.

    In turn, \citet{gyurik2023exponential} discuss many ways to establish quantum advantages in learning, which imply non-dequantization.
    Among the proposals we find the use of \enquote{non-classical} hypothesis families and also \enquote{non-classical} training algorithms, both of which can give rise to learning tasks that are hard for any classical learner and easy for a given quantum learner.
    They show that, in general, the simulation of a hypothesis family is neither sufficient nor necessary for dequantization of the corresponding QML model, as the training algorithm itself could perform classically-hard computations (see Section 3 in \citet{gyurik2023exponential}).
    Conversely, it suffices to assume a classically-easy training algorithm for PQC simulation to be sufficient for dequantization of the QML model \emph{for all learning tasks}.
    Works like~\citet{huang2021power} and~\citet{schreiber2022classical} show that the converse is not true: there exist QML models which are dequantizable but whose circuits are not simulable.

    Another relevant recent work is~\citet{cerezo2023does}, which discusses the relation between classical simulation and absence of BPs in optimization problems.
    There are two important differences between our setting and that of~\citet{cerezo2023does}.
    First, we distinguish between dequantization and simulation, and argue that the former is the relevant quantity.
    Second, we do not observe the possibility of quantum pre-computation in our definitions.

    A novel research avenue asks whether specialized parameter initialization techniques known as \emph{warm starts} \citep{puigivalls2024variational}
    can break current attempts to classically simulate PQCs.
    In warm starts, one considers task-specific parameter distributions, or even data-dependent ones, which can result in non-simulable circuits which are actually simulable under the uniform distribution.
    The existence of other alternatives is currently being actively researched \citep{zhang2024absence}.

\subsection{Trainability and dequantization of existing QML models}\label{ss:existing}

    In Section~\ref{ss:pqcqml} we introduced two types of QML models: those based on deep-layered Parametrized Quantum Circuits (PQCs) and those based on kernel methods.
    Here we briefly comment on known results on trainability and dequantization for both types, and defer to Appendix~\ref{a:examples} a deeper discussion of the two main examples we use in our results below.
    A central work in the quantum kernel literature is \citet{Liu2021discretelog}, where a general-purpose quantum kernel was proposed that provably solves a task based on the Discrete Logarithm Problem (DLP), which is classically intractable under standard cryptographic assumptions.
    Under these assumptions, it follows that there exist kernel-based QML models which are trainable and non-dequantizable.
    Further, \citet{thanasilp2022exponential} characterized a phenomenon similar to BPs for kernel-based QML models called \emph{vanishing similarity}.
    The main negative effect of this phenomenon is on the generalization of kernel-based QML models, and not on their trainability, for which no major issues have been identified so far.
    Finally, most variational QML models can be viewed as based on a family of PQCs called Hardware Efficient Ansatz (HEA)~\citep{Kandala2017hardware}
    The HEA suffers from BPs~\citep{kim2021universal} unless the PQC is shallow and the observable is local~\citep{Cerezo2021cost}.
    Intriguingly, \citet{basheer2023alternating} shows that the HEA is classically simulable as long as the PQC is shallow and the observable is local.
    From these works it follows that existing variational QML models based on the HEA are either trainable or non-dequantizable, but not both.
    In our main result below we show that this is not always the case: there exist variational QML models which are trainable and non-dequantizable.
    
\section{Trainability versus dequantization in variational QML}\label{s:results}
    
    In Section~\ref{ss:existing} we mentioned the DLP kernel as a proof that trainability does not imply dequantization for kernel-based QML models \citep{Liu2021discretelog}.
    In this section, we prove that the same holds for variational QML models, taking inspiration from the Hardware Efficient Ansatz (HEA) \citep{Kandala2017hardware}.
    We first formalize a notion of \emph{variationalness}, and next we show that there exist QML models which are variational in this sense, while also being trainable and non-dequantizable.
    We offer Fig.~\ref{fig:spectrum} as a schematic of the interplay between trainability, dequantization, and variationalness.
    
\subsection{Degrees of variationalness in QML}\label{ss:degrees}

    \begin{figure}
        \centering
        \includegraphics{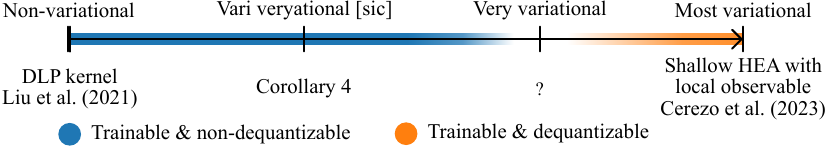}
        \caption{
            Spectrum of variationalness of quantum machine learning models considered in this work.
            The color indicates whether known trainable instances are also dequantizable across the spectrum.
            The shaded white area marks the possible existence of other trainable quantum machine learning models, strictly more variational than the vari veryational models we introduce, which are still non-dequantizable.
        }
        \label{fig:spectrum}
    \end{figure}

    The starting point in the discussion is: what does actually make a quantum algorithm \enquote{variational}?
    The concept clearly has room for different interpretations, which affect what can be proven and what is true.
    The concept of variationalness is not well-defined, but all variational QML models can be seen as particular cases of HEAe, where some of the parametrized $2$-qubit gates are either fixed or turned off.
    Accordingly we introduce a proxy notion that captures the \enquote{similarity to the HEA} as a measure of variationalness.
    
    We build up to a list of well-posed properties that aim at capturing the essence of variationalness from the HEA design.
    We are interested in bottom-up, structural properties for the design of the PQC, and not abstract quantities that might be hard to estimate for a given circuit.
    We propose $5$ properties that capture the essence of HEA.
    We refer to QML models that fulfill these properties as \enquote{vari veryational}\footnote{
        \enquote{Vari veryational} is a \emph{spoonerism} of \enquote{very variational}.
    }:
    (1) the PQC should consist of a single layer $U$ depending on both inputs $x$ and parameters $\vartheta$, that is applied sequentially several times $U(x;\vartheta_1), \ldots, U(x;\vartheta_L)$;
    (2) the number of layer applications $L$ should be tunable and independent of the number of qubits $n$;
    (3) the PQC should start from the $\lvert0\rangle$ state;
    (4) we should measure a fixed observable $\calM_0$, independent of the number of layers $L$, at the end; and
    (5) the parameters must be \emph{gradient-based} trainable.

    We say a parametrized learning model is gradient-based trainable if it is trainable and the training algorithm is gradient based.
    We consider algorithms to be gradient based if they start from random parameters according to an initialization distribution $\calP(\Theta)$ and then the algorithm is only allowed to iteratively update the parameters by exploring a small region around them at each step.
    This restricted notion of trainability has close ties to BPs.
    While it still does not hold that absence of BPs is sufficient for gradient-based trainability \citep{you2021exponentially,anschuetz2022beyond}, now the converse does hold: absence of BPs is necessary for gradient-based trainability.
    Indeed, standard optimization theory dictates that, if a non-trivial optimization problem can be solved via gradient descent, then it follows that the gradient is not vanishingly small almost-everywhere \citep{Goodfellow-et-al-2016}, and so it has no BPs in the sense of Definition~\ref{def:bp}.
    In Appendix~\ref{a:gradient-based} we offer a deeper insight into gradient-based trainability.
    
    Vari veryational models contain deep-layered circuits, which are the main area of study of BPs.
    We do not call these models \enquote{very variational} in case there is a better definition.
    We invite researchers to contribute new bottom-up, structural properties to be added to our list of vari veryational models so that, ideally, we reach the best possible definition, which then would receive the name very variational.

    We finish this section with two small results highlighting the need for a restricted notion of trainability (gradient-based trainability).
    \begin{proposition}[Trainability with Barren Plateaus]\label{prop:BP+train}
        There exist QML models which have BPs under the uniform distribution of parameters and are trainable for learning tasks that are classically intractable under standard cryptographic assumptions.
    \end{proposition}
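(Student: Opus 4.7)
The plan is to construct a kernel-based QML model that simultaneously exhibits barren plateaus (in the sense of~\eqref{eq:BP2}) under a uniform distribution over its variational parameters, is trainable in the sense of Definition~\ref{def:trainability}, and solves a classically intractable learning task. The starting point will be the DLP kernel model of Ref.~\cite{Liu2021discretelog}, which is non-dequantizable under the standard DLP assumption and already trainable via convex SVM: by the representer theorem, SVM returns the empirical-risk minimizer regardless of any landscape pathology in the parameters $\alpha$, so trainability in the sense of Definition~\ref{def:trainability} is essentially automatic for any PSD kernel.

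For any kernel hypothesis $f_\alpha(x) = \sum_{i=1}^{N} \alpha_i k(x,x_i)$ with $\alpha_i$ drawn i.i.d.\ uniformly from a bounded interval, one computes
\begin{align*}
    \bbE_{x\sim\calD}\,\var_\alpha[f_\alpha(x)] \;\propto\; \bbE_{x\sim\calD}\sum_{i=1}^{N} k(x,x_i)^2,
\end{align*}
so reproducing~\eqref{eq:BP2} reduces to exhibiting a DLP-derived kernel whose pairwise values are typically exponentially small. I would achieve this by \emph{dilating} the DLP feature map with an ancilla register: on $n+m$ qubits, take $|\tilde\phi(x)\rangle = |\phi_{DLP}(x)\rangle \otimes |\psi(x)\rangle$, where $|\psi(x)\rangle$ is a data-dependent state on the ancilla (e.g., the output of a deep data-encoded circuit or a pseudorandom state construction) that satisfies $|\langle\psi(x)|\psi(x')\rangle|^2 = \calO(2^{-m})$ for generic $x\ne x'$. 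The induced kernel $\tilde k(x,x') = k_{DLP}(x,x')\cdot|\langle\psi(x)|\psi(x')\rangle|^2$ is then $\calO(2^{-m})$ on typical pairs, so BP2 holds once $m = \Theta(n)$. Trainability is preserved because $\tilde k$ remains PSD and SVM is insensitive to a uniform rescaling of the kernel (up to rescaling of the dual variables); non-dequantization is preserved because the learning task is fixed and classically hard by hypothesis.

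The main obstacle will be ensuring that the ancilla encoding is \emph{label-agnostic}: $|\psi(x)\rangle$ must not correlate with the DLP class labels, or else either the relative magnitudes of same-class versus different-class kernel values will be destroyed (collapsing the SVM margin and thereby preventing $\tilde k$ from solving the task) or, conversely, amplified into a classically exploitable signal (breaking hardness). A safe choice is to let $|\psi(x)\rangle$ depend on $x$ only through bits that are independent of the DLP coset structure. Once the dilation preserves the ordering of $\tilde k(x,x_+)/\tilde k(x,x_-)$ between same- and different-class training pairs up to a uniform $\calO(2^{-m})$ prefactor, the three required properties all follow: BP2 from the displayed identity above, trainability from convex optimization on $\tilde k$ together with the representer theorem, and classical intractability from the DLP reduction of Ref.~\cite{Liu2021discretelog}.
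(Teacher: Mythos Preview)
Your approach is genuinely different from the paper's, and it carries a real gap.

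The paper does \emph{not} modify the kernel at all. It keeps the DLP hypothesis family $f_\alpha(x)=\sum_i\alpha_i\kdlp(x,x_i)$ intact and instead compiles the whole LCU evaluation of $f_\alpha$ into a deep $1$-dimensional brickwork circuit with parameters $\vartheta$. The BP is then a statement about the \emph{circuit} parameters $\vartheta$ under the uniform distribution (deep HEA with a global observable, cf.~\cite{Cerezo2021cost}), while the training algorithm never samples $\vartheta$ at random: it runs the SVM in $\alpha$ and then deterministically sets $\vartheta(\alpha^\ast)$ via the known compilation map. Because the hypothesis class is literally unchanged, the representer theorem and the Liu et al.\ margin/generalization guarantee transfer verbatim, so the model still \emph{solves} the DLP task.

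Your construction instead changes the kernel to $\tilde k(x,x')=\kdlp(x,x')\,\lvert\langle\psi(x)\vert\psi(x')\rangle\rvert^2$, and this is where the argument breaks. The dilation is \emph{not} a uniform rescaling: $\tilde k(x,x)=1$ while $\tilde k(x,x')\in\calO(2^{-m})$ for $x\neq x'$, so the Gram matrix is $I+\calO(2^{-m})$. First, your appeal to ``SVM is insensitive to a uniform rescaling'' therefore does not apply; the SVM you run on $\tilde k$ produces a different $\alpha^\ast$ than the one on $\kdlp$, and the margin that drives the learning guarantee in Ref.~\cite{Liu2021discretelog} collapses to $\calO(2^{-m})$. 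Second, even granting a label-agnostic ancilla, the overlaps $\lvert\langle\psi(x)\vert\psi(x_i)\rangle\rvert^2$ fluctuate multiplicatively by order one across $i$ (for any pseudorandom or generic data-encoded choice), so the sign of $\sum_i\alpha_i\kdlp(x,x_i)\,\lvert\langle\psi(x)\vert\psi(x_i)\rangle\rvert^2$ is not controlled by the sign of $\sum_i\alpha_i\kdlp(x,x_i)$. Third, a quantum device estimates $\tilde k(x,x_i)$ only to additive $1/\poly(n)$ error, which is exponentially coarser than the signal you need; the robustness analysis of Ref.~\cite{Liu2021discretelog} no longer applies. The upshot is that you have not shown the dilated model still solves the DLP task, and without that the proposition's ``classically intractable'' clause is empty: a quantum model that performs at chance on the task is trivially matched by a classical random guesser. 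The paper avoids all of these issues precisely by leaving the kernel (and hence the solved task) untouched and placing the BP only in an over-parametrization.
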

    \begin{proof}[Proof sketch (full proof in Appendix~\ref{a:proofprop})]
        The main idea is to represent the functions arising from the DLP kernel as a HEA of polynomial depth, where only very specific choices of parameters recover the kernel.
        Then it holds that the HEA has a BP under the uniform distribution.
        At the same time, we can always fix the parameters of the HEA to recover the kernel, and this results in trainability from the trainability of kernels.
        The training algorithm here is not gradient based with respect to the parameters of the HEA, though, as first the optimal kernel parameters are found by an SVM and second the circuit parameters are fixed to recover the optimal kernel-based hypothesis.
    \end{proof}
    
    This small result confirms that \emph{absence of BPs is neither necessary nor sufficient for trainability in general}.
    This does not mean that BPs do not suppose a problem when optimizing PQCs variationally, it only means they do not prevent all forms of training.

    We also show that variational QML models can be trainable in general.
    \begin{corollary}\label{cor:deep_train+nondeq}
        There exist deep-layered circuits, like Deep HEA, which are trainable and non-dequantizable.
    \end{corollary}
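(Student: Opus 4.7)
The plan is to obtain this corollary by combining the two preceding propositions, with only minor additional bookkeeping. First, I would apply the construction from the proof sketch of Prop.~\ref{prop:BP+train} to the specific kernel-based model used in Prop.~\ref{prop:train+non-deq}, namely the DLP kernel of Ref.~\cite{Liu2021discretelog}. That construction takes a kernel-based QML model and embeds it inside a deep-layered circuit (a Deep HEA variant) together with a non-gradient-based training algorithm: the algorithm first solves the kernel-based quadratic program to obtain optimal coefficients $\alpha$, and then sets the variational parameters $\vartheta$ of the deep circuit through an explicit efficient mapping so that $f_\vartheta$ reproduces the kernel hypothesis $f_\alpha$. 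Call the resulting learning model $(\calF_Q^{\text{deep}},\calA_{\text{deep}})$.

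Next I would verify trainability of $(\calF_Q^{\text{deep}},\calA_{\text{deep}})$ on the DLP-based learning task. By construction $\calA_{\text{deep}}(S)$ realizes exactly the kernel hypothesis $f_\alpha = f^\ast_S$ that minimizes the empirical risk on the kernel hypothesis class, and the representer theorem guarantees this is empirical-risk-optimal within that class. Since the deep circuit's hypothesis family contains (as a subfamily, via the explicit parameter mapping) the kernel hypothesis family, the output of $\calA_{\text{deep}}$ is at least as good in empirical risk as the kernel optimum, satisfying Def.~\ref{def:trainability} for the DLP task.

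Then I would argue non-dequantization. By Prop.~\ref{prop:train+non-deq}, the DLP kernel solves the DLP-based learning task with high probability, and no classical learner can do so efficiently under standard cryptographic assumptions. Since $(\calF_Q^{\text{deep}},\calA_{\text{deep}})$ outputs the same labeling function as the DLP kernel model on any training set, its risk is essentially the same. Hence no classical learner $(\calF_C,\calA_C)$ running in polynomial time can achieve $R(\calA_C(S)) \lesssim R(\calA_{\text{deep}}(S))$, which is exactly Def.~\ref{def:dequant} violated; thus the deep-layered model is non-dequantizable.

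The main subtlety, rather than a real obstacle, is that one must keep the two distributions in Def.~\ref{def:bp} and Def.~\ref{def:trainability} conceptually separated: the deep circuit does exhibit a BP under the uniform parameter distribution (which is why it is not gradient-based trainable in that initialization), but the training algorithm $\calA_{\text{deep}}$ bypasses random initialization altogether and so the BP poses no obstacle to trainability in the broad sense of Def.~\ref{def:trainability}. Making this split explicit, and pointing out that the mapping from kernel parameters to circuit parameters is efficient so $\calA_{\text{deep}}$ is itself efficient, completes the argument.
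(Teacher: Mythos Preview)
Your approach is exactly the paper's: combine Prop.~\ref{prop:BP+train} and Prop.~\ref{prop:train+non-deq}, noting that the deep-HEA wrapping of the DLP kernel inherits trainability (via the non-gradient reduction to the kernel SVM) and non-dequantizability (via DLP hardness). The paper's proof is two sentences saying precisely this.

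One step in your trainability verification is stated with the containment running the wrong way. You invoke $\calF_Q^{\text{deep}} \supseteq \calF_{\text{kernel}}$ and conclude that matching the kernel optimum satisfies Def.~\ref{def:trainability}. But that containment only gives $\hR_S(f^\ast_{S,\text{deep}}) \leq \hR_S(f^\ast_{S,\text{kernel}})$, i.e., the deep empirical optimum could be strictly better than what $\calA_{\text{deep}}$ outputs, which would violate Def.~\ref{def:trainability}. The paper's argument (in the appendix proof of Prop.~\ref{prop:BP+train}) runs the other direction: the representer theorem makes the kernel solution optimal over \emph{all} linear models in the fixed feature space, and the brickwork hypotheses are asserted to be special cases of such linear models, so the kernel optimum is already optimal over the larger deep class. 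An alternative patch, specific to the DLP task, is that the kernel solution already achieves near-zero empirical risk, which the nonnegative deep optimum cannot improve on; either route closes the gap, but the bare containment you cite does not.
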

    \begin{proof}
        It follows directly from Prop.~\ref{prop:BP+train} and \citet{Liu2021discretelog}.
        The training algorithm involves a reduction to kernel methods and is not gradient-based.
        The learning task is the same as the DLP kernel, which no classical algorithm can solve.
    \end{proof}

\subsection{Trainability and dequantization of vari veryational QML models}\label{ss:results}

    As our final contribution, we show that trainability does not imply dequantization for vari veryational models.
    We show this in two steps.
    We first propose a general recipe in Theorem~\ref{thm:varivery} for gradient-based trainable, non-dequantizable QML models based on a computationally-hard function and an easy optimization task.
    We then adapt this recipe in Corollary~\ref{cor:varivery} specifically for vari veryational QML models.

    \begin{theorem}[Existence of trainable and non-dequantizable QML models.]\label{thm:varivery}
        Let $\calX=\{0,1\}^n$ and $\calY=\{0,1\}$.
        Let $Q(x)$ be a function in $\BQP$ and not in $\Heurpoly$ under a given distribution $\calD_\calX$, and let $U(x)$ be a unitary such that $\langle0\rvert U\dagg(x) Z_1 U(x)\lvert0\rangle=Q(x)$.
        Let $\calH$ be a Hamiltonian, and $W(\vartheta)$ a parametrized unitary for which the following optimization problem can be solved with a given gradient-based algorithm $\calA_{W}$:
        \begin{align}
            \vartheta\conj\gets\arg\max_{\vartheta\in\Theta} \langle0\rvert W\dagg(\vartheta)HW(\vartheta)\lvert0\rangle,
        \end{align}
        and such that $\max_{\vartheta\in\Theta} \langle0\rvert W\dagg(\vartheta)HW(\vartheta)\lvert0\rangle=1$.
        Call $V(x;\vartheta)=U(x)\otimes W(\vartheta)$, and $\calM=Z_1\otimes H$, and consider the corresponding hypothesis class $\calF_{Q}$:
        \begin{align}
            \calF_Q\coloneqq\{f_\vartheta(x)=\langle0\rvert V\dagg(x;\vartheta) \calM V(x;\vartheta)\lvert0\rangle\,|\,\vartheta\in\Theta\}.
        \end{align}

        Let $\calD$ specify a learning task: $\calD(x,y) = \calD_\calX(x)\delta(y=Q(x))$.

        Then $\calF_Q$ is gradient-based trainable for $\calD$, and it is not dequantizable.
    \end{theorem}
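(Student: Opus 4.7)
The plan is to exploit the tensor product structure $V(x;\vartheta) = U(x)\otimes W(\vartheta)$, which, together with the product initial state $\lvert 0\rangle\lvert 0\rangle$ and the split observable $\calM = Z_1\otimes H$, causes the hypothesis to factorize:
\begin{align}
    f_\vartheta(x)
    &= \langle 0\rvert U\dagg(x) Z_1 U(x)\lvert 0\rangle \cdot \langle 0\rvert W\dagg(\vartheta) H W(\vartheta)\lvert 0\rangle \notag\\
    &= Q(x)\cdot g(\vartheta),
\end{align}
where $g(\vartheta)\coloneqq \langle 0\rvert W\dagg(\vartheta) H W(\vartheta)\lvert 0\rangle$. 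This clean separation of inputs from parameters is the backbone of both halves of the argument.

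For the gradient-based trainability direction, I would pick mean-squared error as the empirical risk (any similar loss works). Using $y_i = Q(x_i) \in \{0,1\}$ together with $Q(x_i)^2 = Q(x_i)$, the empirical risk collapses to
\begin{align}
    \hR_S(f_\vartheta) = \bigl(g(\vartheta) - 1\bigr)^2 \cdot \frac{1}{N}\sum_{i=1}^N Q(x_i),
\end{align}
so $\nabla_\vartheta \hR_S(f_\vartheta)$ is a scalar multiple of $\nabla_\vartheta g(\vartheta)$, with sign fixed whenever $g(\vartheta) < 1$. Thus any trajectory of $\calA_W$ that ascends $g$ also descends $\hR_S$, up to the adaptive learning rate allowed in Algorithm~\ref{alg:gradient-based}. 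By hypothesis $\calA_W$ reaches the maximum $g(\vartheta\conj) = 1$, so $\hR_S(f_{\vartheta\conj}) = 0 = \hR_S(f\conj_S)$, meeting Def.~\ref{def:trainability}.

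For the non-dequantization direction, I proceed by contradiction: suppose some $(\calF_C,\calA_C)$ as in Def.~\ref{def:dequant} exists. Then with high probability over training sets $S\sim\calD_\calP^N$ of size $N=\poly(n)$, the classical hypothesis $h_S\coloneqq\calA_C(S)$ achieves risk close to that of $\calA(S)$, hence close to $0$. A small-MSE classifier must agree with $Q$ on almost all $x\sim\calP$ (after thresholding, harmless since $Q$ is $\{0,1\}$-valued). But then packaging the efficient classical algorithm $\calA_C$ together with the polynomial-length $S$ as non-uniform advice places $Q$ inside $\Heurpoly$ with respect to $\calP$, contradicting the standing assumption $Q\notin\Heurpoly$.

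The main obstacle is this last step: carefully converting the definitional approximate-risk guarantee for $\calA_C$ into genuine $\Heurpoly$ membership for $Q$. One must align the nested quantifier order ``with high probability over $S$, then with high probability over $x$'' produced by the learner with the definition of $\Heurpoly$, absorb the learner's failure probability into the heuristic error budget, and confirm that the advice-style use of $S$ respects the polynomial-size advice constraint; these steps are routine but set the technical tone of the whole argument.
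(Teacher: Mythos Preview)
Your proposal is correct and follows essentially the same approach as the paper: exploit the tensor-product factorization $f_\vartheta(x)=Q(x)\,g(\vartheta)$, reduce the empirical MSE to the assumed-easy optimization of $g$, and invoke $Q\notin\Heurpoly$ for non-dequantization. Your treatment is in fact slightly more careful than the paper's own proof---you keep the $\{0,1\}$ label convention consistent with the statement (the paper silently switches to $\{\pm1\}$ labels in its computation), and you spell out the advice-based reduction to $\Heurpoly$ where the paper simply asserts that a classical solver would contradict the hardness assumption.
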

    \begin{proof}[Proof sketch (full proof in Appendix~\ref{a:proofcor})]
        Solving the learning task requires evaluating the function $Q(x)$.
        Since we take the function to be neither classically evaluatable nor classically learnable, it follows that no classical algorithm can solve the learning task.
        Since the function is in BQP and moreover we have access to the circuit that evaluates it, the quantum model can evaluate it.
        Then, the model is gradient-based trainable using the algorithm from the theorem statement.
    \end{proof}
    
    The construction in Theorem~\ref{thm:varivery} resembles the construction in Appendix B of \citet{cerezo2023does}, which also takes advantage of functions in $\BQP$ and not in $\Heurpoly$ (under an appropriate distribution) to prevent classical simulation.
    The main difference is that we solve a supervised learning problem, instead of only evaluating the function.
    Note that the tensor product structure we use is clearly different from typical PQCs one may encounter in QML.
    
    Neither the statement nor the proof of Theorem~\ref{thm:varivery} mentions variationalness.
    It is in the following corollary that we construct a learning separation based on Theorem~\ref{thm:varivery} but for which the quantum model is required to be vari veryational.
    
    \begin{corollary}[Existence of trainable and non-dequantizable vari veryational QML model]\label{cor:varivery}
        There exist vari veryational QML models which are gradient-based trainable and non-dequantizable with any number of layers up to sub-exponentially many in the number of qubits.
    \end{corollary}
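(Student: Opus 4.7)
The plan is to lift the tensor-product construction of Theorem~\ref{thm:varivery} into a uniformly-layered form, thereby converting its hypothesis family $V(x;\vartheta) = U(x) \otimes W(\vartheta)$ into a vari veryational one without losing its learning-theoretic properties. The idea is that both tensor factors admit a compatible decomposition into $L$ identical parametrized layers: the input-dependent factor via a Hamiltonian-simulation representation, and the parameter-dependent factor via a naturally layered Ansatz.

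Concretely, I would first specialize the BQP function $Q(x)$ of Theorem~\ref{thm:varivery} to one given by a Feynman--Kitaev-style local Hamiltonian $K(x)$, so that $U(x) = e^{-iK(x)}$ is efficiently implementable by standard Hamiltonian-simulation techniques. In parallel, I would pick $W(\vartheta)$ of a uniformly layered form $W(\vartheta) = \prod_{\ell=1}^{L} W_{\text{layer}}(\beta_\ell)$ for which the paired $\calA_W$ is a gradient-based trainer (for instance, a convex parametrized eigenvalue problem with a warm start). I would then define the vari veryational layer
\begin{align*}
U_\ell(x;\vartheta_\ell) := e^{-i\alpha_\ell K(x)} \otimes W_{\text{layer}}(\beta_\ell),
\end{align*}
with $\vartheta_\ell = (\alpha_\ell, \beta_\ell)$, so that the composite circuit reads
\begin{align*}
\prod_{\ell=1}^{L} U_\ell(x;\vartheta_\ell) = e^{-i\bigl(\sum_\ell \alpha_\ell\bigr)K(x)} \otimes \prod_{\ell=1}^{L} W_{\text{layer}}(\beta_\ell),
\end{align*}
which, on the parameter slice $\sum_\ell \alpha_\ell = 1$, exactly recovers $U(x) \otimes W(\vartheta)$. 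All four vari veryational desiderata are met: the layer has a uniform functional form, depends on both $x$ and $\vartheta_\ell$, the state is initialized in $\ket{0}$, and the observable $\calM = Z_1 \otimes \calH$ is fixed independently of $L$.

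For gradient-based trainability I would initialize $(\alpha_1,\ldots,\alpha_L)$ tightly around the warm start $(1/L,\ldots,1/L)$ and the $\beta_\ell$ by $\calA_W$'s prescription; along the $\alpha$ direction the gradient coincides with the Hamiltonian-simulation parameter shift and stays non-vanishing in a neighborhood of the warm start, while along $\beta$ trainability is inherited verbatim from $\calA_W$ because the bipartite parameter landscape factorizes across the two registers. Non-dequantization follows directly from Theorem~\ref{thm:varivery}: the labeling function still encodes $Q(x) \notin \Heurpoly$ under $\calP$. The main obstacle is certifying that this construction remains valid for any $L$ up to sub-exponential in $n$; one has to ensure both that the Hamiltonian-simulation sub-circuit per layer remains of polynomial size even as $\alpha_\ell = \calO(1/L)$ shrinks (so that Trotter/segmentation error stays below the measurement precision), and that the enlarged parameter space does not induce a barren plateau around the warm start under the initialization distribution. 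Controlling these two effects uniformly in $L$ is what ultimately pins down the claim.
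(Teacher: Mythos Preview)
Your route is genuinely different from the paper's. The paper adds a $t$-qubit counter register and defines $\tU(x)$ to apply $U(x)$ on the work register only when the counter reads $\lvert 0\rangle$, and then to increment the counter; hence $\tU(x)^L\lvert 0\rangle\lvert 0\rangle = (U(x)\lvert 0\rangle)\lvert L\rangle$ for all $L<2^t$, so the hard computation is performed exactly once regardless of $L$, with \emph{no trainable parameters} on that register. The trainable part is a single extra qubit with layer $R_X(\vartheta_j)$ and observable $Z$. With this, the reduction in the proof of Theorem~\ref{thm:varivery} applies verbatim and gradient-based trainability follows immediately.

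Your Hamiltonian-splitting construction is creative but has a real gap in the trainability step. Although the labeling function factorizes as $f_{(\alpha,\beta)}(x)=g\bigl(\sum_\ell\alpha_\ell,x\bigr)\,h(\beta)$, the empirical risk $\tfrac{1}{2N}\sum_i (g\,h-y_i)^2$ does \emph{not} factorize in $(\alpha,\beta)$, so trainability along $\beta$ is not ``inherited verbatim from $\calA_W$''. More damagingly, whenever $h(\beta)\neq 1$ the gradient in each $\alpha_\ell$ equals the common, generically non-zero quantity $\tfrac{1}{N}\sum_i y_i(h(\beta)-1)\,h(\beta)\,\partial_s g(1,x_i)$, so gradient descent drives $\sum_\ell\alpha_\ell$ off the slice $s=1$ from the very first update. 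Off that slice the first register no longer computes $Q(x)$, and the $s$-landscape is determined by time evolution under a Hamiltonian encoding a $\BQP$-hard problem, over which you have no control; the warm start gives no protection here. The paper's counter trick avoids exactly this failure mode by keeping the hard register parameter-free, which is what makes the reduction to the easy optimization go through. A secondary issue: the Feynman--Kitaev construction encodes circuits in ground states, not as $e^{-iK}$ for local $K$; to justify $U(x)=e^{-iK(x)}$ you would instead need a result on $\BQP$-completeness of local Hamiltonian dynamics.
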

    \begin{proof}[Proof sketch (full proof in Appendix~\ref{a:proofcor})]
        We must only take Theorem~\ref{thm:varivery} and propose a special case where $\calF_Q$ is vari veryational.
        The main hurdle is to ensure that $\calF_Q$ allows a deep-layered structure, as in general even if a circuit $U(x)$ gives rise to a hard function, a sequential concatenation of the circuit $U^L(x)=U(x)U(x) \cdots U(x)$ could give rise to a non-hard function.
        A deep-layered structure is easy to achieve for the trainable part: first we can take $H$ to be single-qubit, with e.g. $H=Z$, and then we can take $W(\vartheta) = \prod_{j=1}^L R_X(\vartheta_j)$.
        With these, it follows that initializing $\vartheta$ uniformly at random and then performing gradient ascent is enough to solve the optimization task.
        For the classically-hard unitary, we start from any given $U(x)$ and expand it into a unitary $\tU(x)$ such that any number (up to exponential) of sequential applications of $\tU(x)$ is equivalent to a single application of $U(x)$.
        We combine $\tU(x)$ and $R_X(\vartheta_j)$ to reach a single layer $V^j(x;\vartheta_j)=\tU(x)\otimes R_X(\vartheta_j)$.
        It suffices for us to define $V(x;\vartheta)=\prod_{j=1}^L V^j(x;\vartheta_j)$ to fulfill all the assumptions of Theorem~\ref{thm:varivery}.
        By going through the list of properties of vari veryational QML models, we can see that indeed, this hypothesis family $\calF_Q$ fulfills all of them.
    \end{proof}

\section{Discussion}\label{s:discussion}

    \begin{figure}
        \centering
        \includegraphics{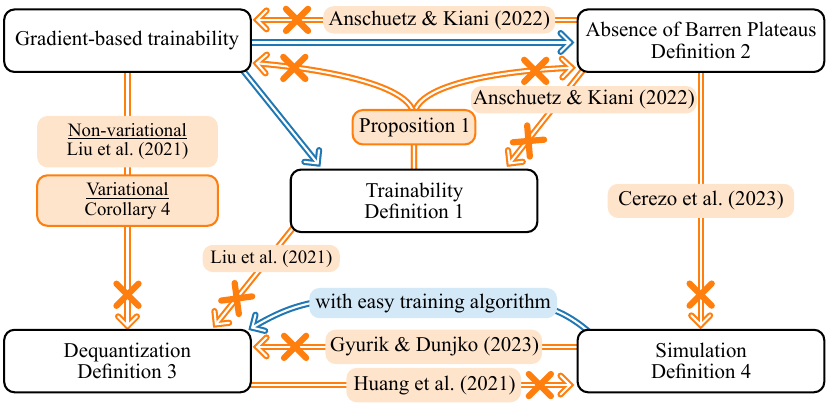}
        \caption{
            Summary of results.
            The blue lines are proven implications, the orange lines indicate the existence of counterexamples.
            Boxes with solid line refer to our contributions, boxes without solid line are results based on prior work.
            }
        \label{fig:implications}
    \end{figure}
    
    Throughout our results, we have made observations that may seen counter-intuitive at first glance.
    For example: absence of Barren Plateaus is neither necessary nor sufficient for trainability in general, but it is necessary for gradient-based trainability.
    Also: classical simulation is neither necessary nor sufficient for dequantization in general, but it is sufficient if we assume the training algorithm is classically efficient.
    These remarks call for a nuanced analysis when moving forward in future studies of trainability and dequantization in variational QML.
    In this work we discussed exclusively supervised learning scenarios, and we identify the outlook direction to expand our formalism to also include unsupervised learning tasks.

    Already in Section~\ref{s:intro} we identify practical relevance as our ultimate goal.
    Still, the models we have designed could be qualified as contrived, as for instance they involve PQCs split up into two disconnected parts.
    One may then worry that the proof of Theorem~\ref{thm:varivery} provides limited insight to the design of PQCs for variational QML.
    This raises an open question: how can we take our constructions closer to solving practically relevant tasks?
    Following the statements in~\citet{cerezo2023does}, the road to practical relevance takes us away from most existing variational QML models, as all natural-looking ones to date are \emph{either} trainable \emph{or} non-dequantizable, but not both.
    This way, despite the contrived appearance, our constructions offer a basis from where to consider new approaches to the design of variational QML that can eventually mark all three boxes: trainability, non-dequantization, and practical relevance.

    We next turn our attention to the broad applicability of our result.
    Theorem~\ref{thm:varivery} is phrased in terms of an arbitrary function that is quantum-efficient and classically-intractable at the same time.
    Together with Corollary~\ref{cor:varivery}, our recipe can be made domain-specific for any single field where a central quantity is hard to learn and evaluate classically.
    It would be relatively straightforward to derive further Corollaries for example dealing with properties of quantum Hamiltonians that are known to be hard to infer classically.
    One could envision a situation where the effect of the variational parameters is to model a known source of noise, which while being classically tractable in isolation, cannot be characterized without a quantum computer when combined with the underlying quantity.

    In a broader context, our results raise deeper questions.
    As discussed in \citet{gyurik2023exponential}, all learning separations we proved rely on computationally hard problems.
    Learning separations based on computational separations often result in statements that are arguably not about learning.
    In our case, the main bottleneck in the learning task was the ability to compute a classically-intractable function, but the learning itself was easy by construction.
    The way in which we ensured that our QML models would solve the problem was by planting the solution in the architecture of the hypothesis family itself.
    This may be deemed unsatisfactory, as the learning task could be trivially solved by the QML model we constructed.
    And yet, a critical question remains: how to design QML models with a certain inductive bias \citep{kuebler2021inductive,peters2022generalization}.
    It seems, then, that a fruitful research direction shall be to explore the space between completely generic circuits (no inductive bias) and circuits where the answer to a computationally hard problem is hard-coded.

    On a different note, previous proofs of simulation of circuits that do not have BPs have relied to some degree on a specific class of simulation algorithms, exploiting so-called \emph{polynomial subspaces} \citep{cerezo2023does}.
    An advantage of this approach is that the dequantization proofs come with a directly implementable algorithm.
    A disadvantage is that these simulation algorithms may only be guaranteed to succeed with high probability over the parameter domain, and that leaves the door open for them to fail precisely on the cases which would be interesting for QML.
    Our contributions in this direction show that it is easy to plant computationally-hard problems in variational QML, and so this challenges the role of average-case classical simulation algorithms.

\section{Conclusion}\label{s:conclusion}

    In this work we have discussed general notions of trainability and dequantization for variational Quantum Machine Learning (QML) models.
    We have proposed clear definitions for trainability and dequantization.
    We have proved several relations between trainability, dequantization, and other common features in the literature like Barren Plateaus (BPs) and quantum circuit simulation, as depicted in Fig.~\ref{fig:implications}.
    We have introduced a new family of \emph{vari veryational} QML models, which convey the essence of commonly used deep-layered QML models and gradient-based training algorithms.
    We have resolved the open question whether, specifically for variational QML models, trainability and dequantization are mutually compatible.
    Our main contributions have been a clear formalization of the question, and a general recipe for variational QML models which are trainable and non-dequantizable.

    The current main goal in variational QML is to find quantum advantage in learning using parametrized quantum circuits and practically relevant, real-life data, as sketched in Fig.~\ref{fig:venn}.
    Our work arises in a specific context, where an alignment between trainability and dequantization for existing QML models has challenged the viability of the entire field \citep{cerezo2023does}.
    In this perspective, we resolve a critical open question: there do exist variational QML models which are trainable and non-dequantizable.
    We provide a recipe for QML models whose guiding principle is leveraging the classical hardness of evaluating some quantum functions.
    Specifically for the recent trend of studying the relation between BPs and classical simulability of quantum circuits, this offers a new conceptual perspective to the design of Parametrized Quantum Circuits (PQCs) for QML.
    The language we provide allows us to advance further, and enable formal discussion on the way toward practical relevance.
    From the point of view of QML practitioners, our prescription ensures that the resulting variational QML models are trainable and non-dequantizable.
    From here, variations on our recipe can be considered to eventually achieve good generalization and expressivity for specific domains of application.

\subsubsection*{Acknowledgements}  

    The authors thank Marco Cerezo, Jens Eisert, Zo\"e Holmes, Jarrod McClean, Luca Franceschi, and Ryan Sweke for their insightful comments in an earlier version of this draft.
    EGF thanks Greg White for pointing out the concept of spoonerism.
    EGF is a 2023 Google PhD Fellowship recipient and acknowledges support by the Einstein Foundation (Einstein Research Unit on Quantum Devices), BMBF (Hybrid), and BMWK (EniQmA).
    CG, APS and VD were supported by the Dutch National Growth Fund (NGF), as part of the Quantum Delta NL programme.
    This work was supported by the Dutch Research Council (NWO/OCW), as part of the Quantum Software Consortium programme (project number 024.003.03), and co-funded by the European Union (ERC CoG, BeMAIQuantum, 101124342).
    Views and opinions expressed are however those of the author(s) only and do not necessarily reflect those of the European Union or the European Research Council.
    Neither the European Union nor the granting authority can be held responsible for them.
    
%%%%%%%%%%%%%%%%%%%%%%%%%%%%%%%%%%%%%%%%%%%%%%%%%%%%%%%%%%%%
\bibliography{sources}
\bibliographystyle{iclr2025_conference}

\onecolumn
\appendix
\numberwithin{equation}{section}

\begin{center}
\large{Supplementary Material for: \\
\enquote{On the Relation between Trainability and Dequantization of Variational Quantum Learning Models}}
\end{center}

\section{Background on quantum computing and parametrized quantum circuits}\label{a:quantumcomputing}

    Basic linear algebra lends itself nicely to formally discuss quantum computing.
    For further details, see~\citet{nielsen2000quantum}.

    Quantum states can be thought of as generalizations of probability distributions over complex-valued fields.
    First, a discrete probability distribution can be represented as a diagonal matrix.
    Then, the set of all discrete probability distributions corresponds to the space of all diagonal matrices with positive entries and unit trace.
    In particular, these matrices are always \emph{positive semi-definite} (PSD).
    Next, we can drop the requirement that matrices be diagonal, and we are left with quantum states $\rho$: Hermitian, PSD, unit-trace matrices,
    \begin{align}
        \rho\in\Herm,\rho\succeq0, \Tr\{\rho\}=1.
    \end{align}
    Hermitianity imposes real-valued diagonals, so the diagonal elements of a quantum state always form a discrete probability distribution.
    Further, the off-diagonal terms correspond to correlations that become relevant if one considers transformations of quantum states.
    The set of quantum states over $n$ qubits corresponds to matrices of dimensions $2^n\times2^n$.
    Similarly, if we talked about classical bits, we would consider the space of probability distributions over bitstrings of length $n$, also $2^n$-dimensional.
    We reach quantum observables if we further drop the requirements of unit-trace and positive semi-definiteness.
    So, the set of all quantum observables over $n$ qubits is the set of all $2^n\times2^n$ Hermitian matrices.

    Quantum observables, as their name indicates, correspond to observable properties of the quantum states, as a direct generalization of observable properties of discrete probability distributions.
    Mathematically, the expectation value of a given quantum observable $\calM$ with respect to a quantum state $\rho$ is captured by the Hilbert-Schmidt inner product of the corresponding matrices $\langle\calM\rangle_\rho=\Tr\{\rho\calM\}$ (there is no dagger due to the Hermitian property).

    One way to conceptualize quantum computers is as machines that have a fixed initial quantum state $\rho_0$ and a fixed quantum observable $\calM_0$, together with a set of possible operations $\calU$ with which to furbish circuits.
    For instance, the trivial circuit would give rise to the value $\Tr\{\rho_0\calM_0\}$.
    In a simple formulation, the allowed quantum operations map quantum states to quantum states as $\rho\mapsto U\rho U^\dagger$, for $U\in\calU$, where $\calU$ is a subset of the special unitary group $\calU\subseteq\operatorname{SU}(2^n)$.
    Then, a quantum circuit with $G$ gates $\{U_g\}_{g=1}^G$ where $U_g\in\calU$, gives rise to the real value $\Tr\{U_G U_{G-1} \ldots U_1 \rho_0 U_1^\dagger \ldots U_G^\dagger \calM_0\}$.
    We could call this a function from sequences of gates of length $G$ to the reals $\calU^{\times G}\to\bbR$.

    For \emph{Parametrized} Quantum Circuits (PQCs), we include parametrized gates in the set of allowed operations.
    These are operations specified by a single number $\omega\mapsto U(\omega)\in\calU$.
    A sequence of gates now involves both parametrized and non-parametrized gates.
    This is how we obtain a parametrized quantum state $\rho(\omega)$, where the actual transformations we apply to the initial state $\rho_0$ are ultimately specified by the values of the parameters $\omega$.
    A PQC gives rise to a function from real vectors to real numbers in that, after fixing the circuit layout, we obtain a different real value for each choice of parameters.
    This is where we have the chance of deciding the role of the parameters, to be either the input $x$, or the trainable ones $\vartheta$.

\section{Motivation and Limitations of the Definitions in Section~\ref{s:defs}}\label{a:philosophy}

\subsection{Kernel-based hypothesis families for ML}

    Given a quantum feature map $x\mapsto\rho(x)$, a corresponding quantum kernel $k(x,x')=\Tr\{\rho(x)\rho(x')\}$, and a training set $S=\{(x_i,y_i)\}_{i=1}^N$, usual kernel-based training algorithms output functions of the form $f_\alpha(x)=\sum_{i=1}^N\alpha_i k(x,x_i)$.
    Accordingly, the hypothesis family of such a model must be the set of functions that can result from every possible training set $\bigcup_{S\in\bbP(\calX\times\calY)} \calA(S)\subseteq\calF_\rho$.
    In general, these are all functions of the form $\Tr\{\rho(x)\calM\}$, for any Hermitian matrix $\calM$.
    Said otherwise, the set of functions realizable via quantum kernels is the set of linear functions of $\rho(x)$.
    
    Usual kernel-based training algorithms make use of the representer theorem, which states that the optimal function in $\calF_\rho$ to fit the training data lives in the span of training set $f_\alpha(x) = \sum_{i=1}^N \alpha_i k(x,x_i)$.
    This way, the training algorithm effectively constructs a subset of functions\footnote{Being careful, we must \emph{not} call this a hypothesis family, because of its training-set dependence.} within the hypothesis family, specified by the training set.
    This way, while the training algorithm sets up an optimization task involving $N$ parameters, the hypothesis family itself remains of dimension exponential in the number of qubits.

\subsection{Trainability}

    Our guiding principle is to propose a definition of trainability that captures practically relevant scenarios.
    For example, we should conclude that deep neural networks are often trainable.
    At the same time, a precise definition cannot ignore the fact that finding global optima of even simple learning models is NP-hard and too much to ask for \citep{pfister2018bounding}, as we elaborate shortly.
    Recall that we refer to \enquote{a learning model} $(\calF,\calA)$ as a pair formed by a hypothesis family $\calF$ and training algorithm $\calA$.
    We discuss notions of trainability that involve both at the same time.
    Consequently, we want to capture the notion that a learning model is considered trainable if the solutions within $\calF$ found by $\calA$ are \emph{good enough}.
    Two aspects need be pinned down: what does good enough mean, and which tasks should the solutions be good for.
    
    Venturing a naive definition like \enquote{a model is trainable if the learning algorithm solves every given task perfectly} runs into fundamental issues: under this definition no model can ever be trainable due to the no-free-lunch theorem \citep{wolpert1997nofreelunch}, which guarantees that for any model there is at least one learning task in which it must fail.
    We could next lower the requirement from \enquote{the learning algorithm solves every given task perfectly} to \enquote{the learning algorithm solves every given task approximately}, meaning
    \enquote{the performance of the hypothesis reached by the learning algorithm $R(\calA(S))$ is approximately optimal \emph{within the hypothesis class} $R(\calA(S))\approx R(f\conj_\calD)$, for any task $\calD$}.
    Under this definition we know that neither classical neural networks nor PQCs already of logarithmic depth can be trainable, unless $\mathsf P=\mathsf{NP}$ \citep{bittel2021training}.
    Hence we must relax the assumptions further.
    Responding with a more relaxed definition like \enquote{a model is trainable if there exists a learning task for which the learning algorithm outputs an approximately optimal hypothesis} results in too weak a notion for which all models are trainable, as another consequence of the no-free-lunch theorem.
    Basing the definition on reaching locally optimal solutions is no help either, as there are models for which almost all local optima are guaranteed to be as bad as random guessing \citep{anschuetz2022beyond}, so there would be models which are trainable under this definition, but which always output bad solutions.
    Many known QML models display mostly bad local minima, and it is a still poorly understood miracle of deep learning that this it so often \emph{not} the case there, where empirically good local minima seem to be easy to find.
    To avoid these problems, we decide to work with a context-dependent notion of trainability.

    \begin{customdefinition}{\ref{def:trainability}}[Task-dependent trainability]
        Given a suite of learning tasks of interest $\{(\calD_i,R_i)_i\}$,
        a learning model $(\calF,\calA)$ is \emph{trainable} if the following holds for each task: the empirical risk of the hypothesis produced by the learning algorithm is close to the optimal one within the class $\hR_S(\calA(S))\approx \hR_S(f\conj_S)$ with high probability with respect to a training set $S$ of polynomial size being sampled i.i.d. according to $\calD$.
    \end{customdefinition}

    The main limitation of this definition is that it can be difficult to apply in real-life scenarios.
    Indeed, knowing the performance of the empirical risk minimizer is a necessary condition for assessing the trainability of learning model under this definition, which can be difficult.
    Also, for broad applicability of the definition, one would like to decide whether a model is trainable without having to train it, in order to efficiently decide between different models to use.

\subsection{Barren Plateaus}

    The \emph{Barren Plateau} (BP)~\citep{mcclean2018barren} phenomenon appeared in the literature of Variational Quantum Algorithms (VQAs)~\citep{cerezo2021variational,bharti2022noisy}.
    The main idea for VQAs is to tune the control parameters of a parametrized quantum circuit with a hybrid quantum-classical optimization loop.
    Such an approach requires computing gradients of the function to be optimized.
    The boon of this approach is that one has access to the large Hilbert space of quantum computing via self-adjusting classical control knobs.
    The bane of this approach is that for large systems, the gradients become smaller than the available quantum machine precision.
    Plenty of literature has been devoted to charting the space of VQAs as to whether they suffer from BPs, ways to avoid BPs, and sharpening the analytical tools required to diagnose BPs~\citep{larocca2024review}.

    One can think of QML as a special case of a VQA, where the function to be optimized happens to also depend on training data.
    Still, the common definition of BPs anticipates a scenario where there are only optimization parameters.
    We would like to introduce a definition of BPs that is applicable to any \emph{parametrized} learning model\footnote{Recall that we defined parametrized learning models as those whose functions are specified by at most polynomially-many real values.}, be it quantum or else.
    For this reason, we introduce a notion of BPs that relies on the concentration of measure: functions coming from the same parametrized hypothesis family should concentrate to the same function even if they are realized from different parameter specifications.
    
    \begin{customdefinition}{\ref{def:bp}}[Barren Plateaus in Machine Learning]
        The parametrized hypothesis family $\calF\coloneqq\{f_\vartheta\,|\,\vartheta\in\Theta\}$ has a \emph{Barren Plateau} with respect to the parameter distribution $\calP$ and the data distribution $\calD$ if the function values concentrate exponentially in the following sense:
        \begin{align}
            \bbE_{x\sim\calD_\calX}\left[\var_{\vartheta\sim\calP} \left[ f_\vartheta(x)\right]\right] &\in\calO(\exp(-n)),
        \end{align}
        where $\var$ is the variance with respect to the parameter distribution $\calP$, and $\calD_\calX$ is the marginal distribution over $\calX$ arising from the problem distribution $\calD$.
    \end{customdefinition}

    Indeed, this definition fulfills the two target properties: it captures a concentration phenomenon, and it can be applied to any parametrized learning model.
    The operational meaning of this definition changes depending on the type of the ML model, though.
    We set this definition so that the moral of the story remains the same: if a ML model exhibits a BP in this sense, then there may be obstacles preventing trainability via usual methods.

    Discussing whether kernel-based QML models have BPs highlights their fundamental differences from other variational QML models.
    As introduced in Section~\ref{s:defs}, the hypothesis family $\calF_\rho$ of a model based on the quantum kernel arising from $\rho(x)$ is the set of real-valued linear functions of $\rho(x)$.
    Then, for us to ask whether $\calF_\rho$ has a BP, we must provide a distribution over the space of Hermitian matrices.
    This requirement already diverges entirely from the typical pipeline of kernel methods for (Q)ML.

    For instance, the uniform distribution over all Hermitian matrices of bounded norm would result in exponentially small function values independently of $\rho(x)$.
    That does not mean that the output functions of kernel-based training algorithms are exponentially concentrated in expectation.
    Rather, this highlights that Barren Plateaus do not affect all possible forms of training.
    
    The work of \citet{thanasilp2023subtleties} performs a similar analysis.
    By introducing the concept of \emph{vanishing similarity} (a concentration of the kernel function over random inputs), they offer a platform to study the learning dynamics of gradient-based training algorithms.
    Here we concentrate on the hypothesis family, and not on the data-dependent function family produced by the training set.
    We note that there could be different natural definitions of BPs for QML, and that specifically for kernel-based hypothesis families the presence of BPs would be very sensitive to small changes in the definitions.
    
    Notice that under our definition, a hypothesis family with only constant functions would not have a BP as long as the value of the constant has a high variance with respect to the model parameters.

\subsection{Dequantization and Simulation}

    We give a definition of dequantization that is task-dependent, and is set aside from common definitions of circuit simulation.
    The guiding principle was to lay out the scenario in which one would not need to use a quantum computer in order to solve a learning task.
    \begin{customdefinition}{\ref{def:dequant}}[Dequantization in QML]
        Given a suite of learning tasks of interest $\{(\calD_i,R_i)_i\}$, a QML model $(\calF_Q,\calA)$ is \emph{dequantizable} if there exists a classical hypothesis family $\calF_C$ and learning algorithm $\calA_C$, such that the following holds for each of the learning tasks: given a training set $S\sim\calD^N$, the performance of the classical learner is approximately at least as good as that of the quantum learner: $R(\calA_C(S)) \lesssim R(\calA(S))$.
    \end{customdefinition}
    The alternative would be the scenario in which one does not need to use a quantum computer to evaluate expectation values.
    \begin{customdefinition}{\ref{def:simul}}[PQC simulation]
        A PQC-based function family $\calF_Q$ is \emph{classically simulable} if there exists a classical poly-time algorithm which approximates $f_\vartheta(x)$ to good precision given $\vartheta$ and $x$.
    \end{customdefinition}

    As justification, we saw that there is a gap between these definitions in Section~\ref{s:related}.

    A limitation of this definition is that it could allow for very powerful learning models to be dequantized for relatively easy learning tasks.
    Indeed, if we started from learning tasks which a classical model can solve approximately optimally to begin with, then every all-powerful learning model is immediately dequantizable for those learning tasks, even if the classical model is completely unrelated to the structure of the all-powerful model.
    This is in contraposition for instance to the dequantization techniques proposed by \citet{Tang2019}, where a clever classical data structure was able to dequantize the proposed QML approaches by replicating their information processing pipeline.
    Conversely, we would say \enquote{the DLP kernel has been dequantized for the DLP-based learning task} already if we found an efficient classical algorithm for the discrete logarithm, in which case learning would be done by a completely different classical method, and not necessarily a kernel-based one.
    Thus this definition spans also a gray area housing counter-intuitive true statements.
    This is not to our detriment, as it is our manifest intent to consider broad definitions that could be further specialized a posteriori if needed.

\section{Further discussion on trainability and dequantization of existing QML models}\label{a:examples}

    In this section we first provide a deeper discussion on the two models that appear in our results: a quantum kernel linked to the Discrete Logarithm Problem (DLP)~\cite{Liu2021discretelog}, and the Hardware Efficient Ansatz (HEA)~\cite{Kandala2017hardware}.
    At the end we comment on how other existing QML models fit our categories of trainability and dequantization.

\subsection{The DLP kernel}\label{ss:examples}

    \citet{Liu2021discretelog} used the classical hardness of the Discrete Logarithm Problem (DLP) to show a quantum-classical separation in learning.
    They designed a learning task that cannot be solved by any classical learner according to standard cryptographic assumptions.
    They showed that the same task could be solved using a general-purpose quantum kernel, which we call the \enquote{DLP kernel} from now on.
    We next analyze the DLP kernel in terms of its trainability and dequantization.
    For further details on the DLP kernel we refer readers to \citet{Liu2021discretelog}.
    Additionally, Appendices~\ref{a:philosophy} and~\ref{a:gradient-based} contain thorough discussions on BPs and gradient-based trainability specifically for kernel methods.

    First, the DLP kernel is trainable because the representer theorem guarantees that all kernel methods are trainable \citep{scholkopf2002learning}.
    For instance, using a Support Vector Machine (SVM) we obtain the actual minimum of the empirical risk.
    Also, \citet{thanasilp2022exponential} identified that kernel-based ML models could suffer from a phenomenon similar to BPs, dubbed \emph{vanishing similarity}, if the kernel function $k(x,x')$ is exponentially close to $0$ almost everywhere.
    In \citet{Liu2021discretelog} we see that the DLP kernel takes non-trivial values, so the DLP kernel does not suffer from vanishing similarity.

    In addition, the learning-theoretic results from~\citet{Liu2021discretelog} guarantee that no classical algorithm can solve the same learning task as the DLP kernel efficiently.
    This proves that the QML model based on the DLP kernel is non-dequantizable.
    The latter argument says nothing about simulating the corresponding circuit directly, but since \emph{simulable} $\Rightarrow$ \emph{dequantizable} with a classically-easy training algorithm (like SVM), it immediately follows that \emph{non-dequantizable} $\Rightarrow$ \emph{non-simulable}.
    
    It thus follows from prior work that the DLP kernel is both trainable and non-dequantizable.

\subsection{Variational QML with the HEA}

    The Hardware Efficient Ansatz (HEA) \citep{Kandala2017hardware} is a circuit template for universal quantum computation.
    The HEA is composed of sequential layers that respect the lower-level connectivity of the computing platform.
    A particularly well-studied version is the $1$-dimensional HEA, where qubits are organized in a line and each qubit is only connected to its nearest neighbors.
    Such $1$-dimensional HEAe\footnote{The plural of Ansatz is Ans\"atze.} give rise to deep-layered circuits, where each layer is a column of $2$-qubit gates, and consecutive layers alternate between even and odd connections.
    HEA is a natural substrate for variational QML models as introduced in Section~\ref{s:defs}: we first prepare a parametrized state $\rho(x;\vartheta)$ and next measure a fixed observable $\calM$.
    For further details on the HEA we refer readers to \citet{Kandala2017hardware}, and specifically to \citet{Cerezo2021cost} for a discussion on its Barren Plateau analysis.
    Importantly, the free parameters of $\vartheta$ are typically trained \emph{variationally}, via an iterative local optimization procedure.
    We next characterize the trainability and potential for dequantization of HEA-based QML models.
    
    \citet{Cerezo2021cost} showed that HEAe can have BPs under the uniform distribution of parameters due to their generic structure.
    In parallel, empirically HEAe have defied being trained \citep{kim2021universal}.
    A special case is that of HEAe based on shallow PQCs, of up to logarithmic depth.
    \citet{Cerezo2021cost} proved that shallow HEAe do not have a BP when the observable is local (it acts only on a few qubits).
    Similarly \citet{basheer2023alternating} showed that HEAe are classically simulable provided the circuits are shallow and the observable is local.
    Thus, together with a classically-tractable training algorithm, shallow HEAe with local observable are dequantizable.
    Crucially, while generic HEAe of linear depth or shallow HEAe with global observables are not classically simulable, training them by usual means is expected to be difficult due to BPs.
    
    Indeed, the same conditions that ensure absence of BPs in HEAe also allow for their dequantization.
    Intriguingly, this is not an isolated case: \citet{cerezo2023does} shows that for most PQC Ans\"atze for which we have a proof of absence of BPs, we also have a proof of classical simulation.
    This hints at a larger question: \emph{Does absence of BPs imply classical simulation for all variational quantum circuits?}
    The answer is negative in general, as one can construct contrived counter-examples (see Appendix B of \citet{cerezo2023does}).
    Still, the question of absence of BPs versus classical simulation in naturally-occurring quantum optimization problems is not fully resolved.
    Comprehensive sufficient and necessary conditions for absence of BPs to imply dequantization are not yet known.
    
    Also, the question of trainability versus dequantization of variational QML models has not been previously addressed.

\subsection{Other existing QML models}

    We again separate two types of QML models: those based on kernel methods and those based on PQCs with local optimization methods.
    
    For kernel-based models, the discussion in~\citet{thanasilp2022exponential} covers most of the ground.
    They explain that trainability is not a central issue in kernel-based ML, both quantum and not, as the training algorithms used in practice are empirical risk minimizers.
    We defined trainability as the ability of the training algorithm to output a close-to-optimal solution with respect to the empirical risk.
    It is generally true that kernel-based (Q)ML models fulfill this condition, so they are generically trainable.
    Regarding dequantizability, not many results have proven quantum advantage in learning using a kernel-based model beyond~\citet{Liu2021discretelog} and~\citet{huang2021power}.
    An alarm raised in~\citet{thanasilp2022exponential} is that, for QML models based on generic quantum kernels, one expects an alignment between \enquote{lack of vanishing similarity} and classical simulation of the PQC.
    In this case lack of vanishing similarity is linked to good generalization performance, so the trade-off becomes between generalization and non-dequantization, which is also negative.
    It remains an open question whether the general-purpose DLP kernel is the only one which formally fulfills both trainability and non-dequantization.
    
    Most existing variational QML models can be seen as special cases of the HEA, where different models arise from different choices of which gates are fixed to be data-dependent, and which gates are left open for their parameters to be optimized.
    In this sense, one would expect generic variational QML models to suffer from BPs whenever the HEA they are based on suffers from a BP.
    This effect could be mitigated either by considering very specific circuit structures, or by turning several gates completely off, or by introducing correlations between the parameters that would break the i.i.d. assumptions in the theorems that diagnose BPs.
    This way, regardless whether encoding first~\citep{farhi2018classificationquantumneuralnetworks}, data re-uploading~\cite{perezsalinas2020reuploading}, or flipped models~\cite{jerbi2023shadows}, the trainability of variational QML models could be thwarted by the propensity of HEA to suffer from BPs.
    The picture is similar for dequantization: if the underlying PQC is classically simulable, then generically the resulting QML is also simulable, relatively independent of the type of variational QML model at hand.
    A potential cure against BPs is the fact that the data distribution is unknown.
    Parametrized function families have BPs according to a specific distribution.
    In supervised learning, we typically assume the data distribution is fixed but unknown.
    So, formally this means that we cannot establish analytically whether a variational QML model suffers from a BP unless we impose certain assumptions on the data distribution.
    Noteworthy is also that recent works have specialized the study of BPs to data re-uploading circuits~\citep{barthe2023gradients, mhiri2024constrainedvanishingexpressivityquantum}, via the Fourier picture introduced in~\citet{schuld2021fourier}.

\section{Proof of Proposition~\ref{prop:BP+train}}\label{a:proofprop}

    Here we restate and prove the proposition in Section~\ref{s:results}, using the definitions for trainability and Barren Plateaus (BPs) introduced in Section~\ref{s:defs}.

    \begin{customproposition}{\ref{prop:BP+train}}[Trainability with Barren Plateaus]
        There exist QML models which have BPs under the uniform distribution of parameters and are trainable for learning tasks that are classically intractable under standard cryptographic assumptions.
    \end{customproposition}
    \begin{proof}
        We prove the statement directly by giving an example.
        Consider the DLP kernel $\kdlp(x,x')=\Tr(\rho(x)\rho(x'))$ introduced in~\cite{Liu2021discretelog}.
        Given a training set $\{(x_i,y_i)\}_{i=1}^N$, consider the set of functions resulting from using the representer theorem:
        \begin{align}
            \calF_k &\coloneqq \left\{\left.f_\alpha(x) = \sum_{i=1}^N\alpha_i \Tr\{\rho(x)\rho(x_i)\} \,\right|\,\alpha\in\bbR^N\right\}.
        \end{align}
        To evaluate $f_\alpha(x)$, we typically evaluate each overlap $\Tr\{\rho(x)\rho(x_i)\}$ separately and then sum.
        We could instead evaluate it all at once with a single circuit, using the Linear Combination of Unitaries (LCU) formalism \citep{childs2012hamiltonian}.
        Here, we use LCU to design a circuit whose observable corresponds to a mixed quantum state, which we express as a classical mixture of pure states explicitly.
        In this way, we do not take advantage of the interference phenomena that LCU usually allows, but rather use it to create a linear combination of states.
        
        We consider the parametrized observable $\calM(\alpha,(x_i)_i)= \sum_{i=1}^N \alpha_i \rho(x_i)$ and we recognize we can implement it as an LCU circuit.
        On the one hand, we prepare the state $\lvert \alpha\rangle=\sum_i\alpha_i\lvert i\rangle$ on an auxiliary system.
        W.l.o.g. the vector $\alpha$ can be taken to be appropriately normalized.
        On the other hand, we consider the data-dependent unitary gate $V(x)$ that gives rise to the quantum embedding $\rho(x)=V(x)\lvert0\rangle\!\langle0\rvert V\dagg(x)$.
        Then, we construct a circuit in which, after $\lvert\alpha\rangle$ has been prepared, we apply $V(x_i)$ on the work register, controlled on the auxiliary register being in state $\lvert i\rangle$.
        We initialize the computer to be on the all $\lvert0\rangle$ state, then evolve it to become $\lvert\alpha\rangle\otimes V(x)\lvert0\rangle$, then apply the controlled-$V(x_i)$ gates, and finally measure the projector onto $\lvert0\rangle$ on the work register and a diagonal observable that takes care of the signs on the auxiliary register.
        The resulting expectation value is the same as for the kernel method $\Tr\{\rho(x)\calM(\alpha,(x_i)_i)\}=f_\alpha(x)$.
        
        In the LCU circuit, some gates are $1$-qubit trainable, and some gates are $2$-qubit fixed, but we have an efficient description of the whole circuit.
        Next, rewrite all gates as $2$-qubit arbitrary gates, with a given known parametrization, from which we can recover the LCU circuit explicitly.
        The $2$-qubit gates at this step must observe a $1$-dimensional connectivity graph, so each qubit may only be connected to adjacent qubits.
        This means that any non-local $2$-qubit gate must be compiled as linearly many local $2$-qubit gates.
        Organize all gates in non-commuting layers, and complete any layers with new $2$-local gates, overall resulting in a deep-layered brickwork architecture.
        This step includes also the encoding gates required to prepare $\rho(x)$, where a possibly-discrete gate set is compiled with the continuous set of arbitrary local $2$-qubit gates.
        
        We call $\vartheta$ the parameters of the brickwork Ansatz, for which, crucially, we have an efficient specification $(\alpha;(x_i)_i)\mapsto\vartheta$ such that we recover the LCU kernel-based parametrized observable.
        We choose not to write $\vartheta(\alpha;(x_i)_i)$ for ease of notation.
        The depth of the circuit is at least linear in the size of the training set, and also at least linear in the required complexity of preparing $\rho(x)$.
        The observable we measure at the end is global: it consists of a collection of Pauli-$Z$s and projectors onto the $\lvert 0\rangle$ state on all qubits.
        Then, from \citet{Cerezo2021cost} we know that this quantum circuit has a BP under the uniform distribution over $\vartheta$, as it is an instance of the HEA with linear depth and global observable.
        Crucially, this Ansatz has a BP with respect to the circuit parameters $\vartheta$, and not the free parameters of the kernel-based functions $\alpha$, which are just a subset of the entire function family, and have a very particular parametrization associated to it.

        Nevertheless, from \citet{Liu2021discretelog} we know this model is capable of solving a learning task based on the DLP which is assumed to be hard for any classical learner.
        The reduction from this model onto that of \citet{Liu2021discretelog} needs only that we fix the parameters to recover the LCU circuit.
        So, the learning algorithm we use takes two steps:
        \begin{enumerate}
            \item Use the $\kdlp$ and a Support Vector Machine (SVM) to reach the optimal $\alpha\conj$ vector, which we know actually solves the problem as explained in \citet{Liu2021discretelog}.
            \item Set the parameters $\vartheta$ of the brickwork Ansatz to recover $f_{\alpha\conj}(x)$ as an LCU circuit.
        \end{enumerate}
        The solution we are left with $f_{\alpha\conj}$ satisfies the requirements of the proposition: it is trainable, in the sense that there is a training algorithm with which it solves a task of interest; and it has a BP, in the sense that the cost function concentrates exponentially with respect to the uniform distribution of the circuit parameters.
        Crucially, we say the model is trainable because, even though the hypothesis family generated by the brick-layered Ansatz is strictly larger than that generated only by the quantum kernel, the representer theorem guarantees that the solution found by SVM is optimal over all linear models, of which anything generated by the layered Ansatz is a special case.
        This completes the proof.

        We note the same strategy would not necessarily follow from any arbitrary trainable circuit: in the sense that just compiling a trainable PQC into a deep-layered brickwork Ansatz because the added functions might give rise to a better solution to the learning task.
        It could be that by making the hypothesis family larger, what used to be a good-enough solution for the smaller model stops being close to optimal for the larger model.
        For this step to still work, we must exploit some guarantee of optimality like the one given by the representer theorem.
    \end{proof}
    
    The goal of this result is to reinforce the difference between the notions of trainability that we introduce in this work and its proxy, Barren Plateaus.
    Prop.~\ref{prop:BP+train} only points out that trainability and BPs are phenomena of different nature.

\section{Gradient-based trainability}~\label{a:gradient-based}
    
    Trainability as introduced in Definition~\ref{def:trainability} is very broad, since it allows for unconventional training algorithms unused in practice.
    We introduce \emph{gradient-based} training algorithms with the goal of reaching a more stringent notion that relates to currently used training algorithms \citep{cerezo2021variational,bharti2022noisy,nietner2023unifying}.
    In the language of Section~\ref{s:defs}, for a model $(\calF,\calA)$ to be gradient-based trainable, it needs to be trainable according to Def.~\ref{def:trainability}, and the training algorithm $\calA$ needs to be a gradient-based.
    This section introduces and motivates gradient-based training algorithms and discusses the relation between gradient-based trainability and BPs.
    
    We draw inspiration from the well-known Gradient Descent (GD) algorithm, which requires an initial specification of parameters, and uses the gradient of the empirical risk with respect to the parameters $\nabla_\vartheta \hR_S(f_\vartheta)$.
    In GD an initial parameter specification is typically sampled from a distribution $\calP$ over the parameter domain $\Theta$.
    Then, parameters are sequentially updated in the direction opposite to the gradient $\nabla_\vartheta \hR_S(f_\vartheta)$ to minimize the empirical risk.
    The particular parametrization $\vartheta\mapsto f_\vartheta$ is relevant, as different parametrizations of the same hypothesis family give rise to different gradients, and thus to different outputs using GD \citep{Wiersema2024here}.
    This way, whether GD produces a good output from a hypothesis family $\calF$ depends not only on the task $(\calD,R)$, but also on the initialization distribution $\calP$, and the specific parametrization.

    To paint a concrete picture, we offer an algorithmic description of a general gradient-based training algorithm in Algorithm~\ref{alg:gradient-based}.
    The following example retains the central role of both: the initialization distribution $\calP$ and the specific parametrization of $\calF$\footnote{
        There exist non-gradient-based methods which share these features, like genetic or evolutionary algorithms, which we could have also included in an even more general family.
        We decide against a more general notion for the ease of presentation.
    }.
    
    \begin{algorithm}[H]
        \caption{Gradient-based training algorithm.}
        \label{alg:gradient-based}
        \begin{algorithmic}[1]
            \Require $\calF\coloneqq\{f_\vartheta\,|\,\vartheta\in\Theta\}$ \Comment Parametrized hypothesis family.
            \Require $\hR_S(f_\vartheta)$ \Comment Empirical risk functional.
            \Require $\calP(\Theta)$ \Comment Parameter initialization distribution.
            \Require $C(\vartheta)\gets (\nabla_\vartheta \hR_S(f_\vartheta), H_\vartheta \hR_S(f_\vartheta))$ \Comment Learning rate.
            \Ensure $\vartheta\in\Theta$ \Comment Trained parameters.
            \State $\vartheta_0\sim\calP$ \Comment Sample initial parameters
            \For{$t$ in $\{1,\ldots,T\}$}:
                \State $\vartheta_{t}\gets \vartheta_{t-1} + C(\vartheta_{t-1}) \nabla_\vartheta \hR_S(f_{\vartheta_{t-1}})$ \Comment Update rule.\label{line:update-rule}
            \EndFor
            \State\Return $\vartheta_T$
        \end{algorithmic}
    \end{algorithm}

    An important restriction of gradient-based training algorithms is that they should not be allowed to evaluate the empirical risk corresponding to arbitrary parameters $\vartheta$.
    Rather, the algorithm should start from a random initial specification $\vartheta_0\sim\calP$, and then only be allowed to iteratively select new parameters which are geometrically close in parameter space.
    This is why we limit the algorithm to only take steps in the direction of the gradient, up to an adaptive learning rate $C(\vartheta)$.
    We prevent more complex algorithms to hide inside this update rule by restricting the allowed functional form of $C(\vartheta)$, which at step $t$ may only depend on $t$, $\nabla_\vartheta \hR_S(f_{\vartheta_t})$, and $H_\vartheta (\hR_S(f_{\vartheta_t}))$, where $H_\vartheta$ denotes the Hessian.
    
    We note that a parameter distribution $\calP$ already played a central role when introducing Barren Plateaus (BPs) in Def.~\ref{def:bp}.
    Gradient-based trainability and BPs are clearly deeply related: if the model has a BP according to the distribution $\calP$, then the value of all functions concentrates exponentially with high probability over $\calP$, and so the gradient becomes exponentially small with high probability over $\calP$.
    Having exponentially small gradients signifies a much harder problem for QML than for classical ML models.
    For QML models, the available precision is only polynomial in the total quantum runtime (or said otherwise, exponential precision requires exponential quantum time).
    Conversely, for classical ML models, precision is exponential in the number of bits of memory, and the total runtime is polynomial in the number of bits, so exponential precision is within budget.
    All together: having exponentially small gradients already represents a problem by itself, as the time to convergence could be exponentially long.
    But furthermore, for QML models we would not be able to resolve the exponentially small gradients in polynomial time, so exponentially small gradients are statistically indistinguishable from $0$ gradients.

    We next briefly comment on the gradient-based trainability of learning models based on quantum kernels.
    The defining factor of these learning models is that the training algorithm uses kernel-based techniques.
    Kernel-based training algorithms invoke the representer theorem to build an effective, data-dependent function family over which to optimize a few parameters.
    Since this step is not included among the allowed operations of gradient-based training algorithms, it follows that in general these models are not gradient-based trainable.

    Said differently, kernel-based training algorithms first restrict the search space to a subspace given by data, and then optimize within the subspace.
    Often, this second optimization can follow the gradient-based prescription, which would be efficient.
    However, the first step (restriction to a data-dependent subspace) is something other than \enquote{random parameter initialization plus local optimization}.
    In all, even though kernel-based training algorithms could have a gradient-based optimization subroutine, we demand for an algorithm to be end-to-end gradient based in order to qualify for our definition.

    As we introduced, the hypothesis family of learning models based on quantum kernels is the set of linear models of the feature map $\Tr\{\rho(x)\calM\}$, for $\calM\in\Herm$.
    On the one hand, usual risk functionals could be convex in the space of linear models, which would in principle open the door to gradient-based trainability.
    On the other hand, the expected runtime of such an algorithm would be at least linear in the dimension of the space.
    Given that $\Herm$ is of dimension exponential in the number of qubits, these models are in general not \emph{efficiently} gradient-based trainable.
    
    In general, it may well be that a given parametrized learning model has a BP or is gradient-based trainable under a given distribution, but the same statement is not true for the same model under a different distribution.
    For this reason, the presence of BPs and the gradient-based trainability of a given parametrized learning model should always be discussed with respect to the same parameter distribution $\calP$.
    In particular, it follows that, for the same distribution $\calP$, a model that has a BP cannot be \emph{efficiently} gradient-based trainable.
    In this direction, new research trends ask about specialized parameter distributions to improve the gradient-based trainability of variational QML models, under the name of \emph{warm starts} \citep{puigivalls2024variational}.

    As a closing remark, the importance of the initialization distribution is intuitive when discussing Neural Networks (NNs).
    Improvements in initialization have brought about major steps in the development of successful NN-based ML.
    This way, our definition of BPs for ML also captures the \emph{vanishing gradients} problem of NNs \citep{hochreiter1997long}, strictly from the lens of the initialization distribution (and not from the lens of the activation function, normalization, regularization, or residual activations).

\section{Proof of Theorem~\ref{thm:varivery} and Corollary~\ref{cor:varivery}}\label{a:proofcor}

    We first re-state and prove the results from Section~\ref{ss:results}, and next we briefly discuss their relation to the construction in Appendix B of \citet{cerezo2023does}.
    \begin{customthm}{\ref{thm:varivery}}[Existence of trainable and non-dequantizable QML models.]
        Let $\calX=\{0,1\}^n$ and $\calY=\{0,1\}$.
        Let $Q(x)$ be a function in $\BQP$ and not in $\Heurpoly$ under a given distribution $\calD_\calX$, and let $U(x)$ be a unitary such that $\langle0\rvert U\dagg(x) Z_1 U(x)\lvert0\rangle=Q(x)$.
        Let $\calH$ be a Hamiltonian, and $W(\vartheta)$ a parametrized unitary for which the following optimization problem can be solved with a given gradient-based algorithm $\calA_{W}$:
        \begin{align}
            \vartheta\conj\gets\arg\max_{\vartheta\in\Theta} \langle0\rvert W\dagg(\vartheta)HW(\vartheta)\lvert0\rangle,
        \end{align}
        and such that $\max_{\vartheta\in\Theta} \langle0\rvert W\dagg(\vartheta)HW(\vartheta)\lvert0\rangle=1$.
        Call $V(x;\vartheta)=U(x)\otimes W(\vartheta)$, and $\calM=Z_1\otimes H$, and consider the corresponding hypothesis class $\calF_{Q}$:
        \begin{align}
            \calF_Q\coloneqq\{f_\vartheta(x)=\langle0\rvert V\dagg(x;\vartheta) \calM V(x;\vartheta)\lvert0\rangle\,|\,\vartheta\in\Theta\}.
        \end{align}

        Let $\calD$ specify a learning task: $\calD(x,y) = \calD_\calX(x)\delta(y=Q(x))$.

        Then $\calF_Q$ is gradient-based trainable for $\calD$, and it is not dequantizable.
    \end{customthm}
    \begin{proof}
        As explained in \citet{gyurik2023exponential}, for $Q:\{0,\ldots,2^n-1\}\to\{0,1\}$ to not be in $\Heurpoly$, $Q$ must be hard to evaluate and learn classically from examples.
        That means, that given polynomially many input-out pairs, no classical algorithm can evaluate it classically with probability larger than $1/2+1/\poly(n)$ over $x\sim\calD_\calX(x)$.
        This could be for example the most significant digit of the discrete logarithm of $x$ according to a known basis.
        Based on $Q(x)$, we can consider a binary classification task in which each integer $x$ is assigned a class $y(x)\in\{\pm1\}$ depending on $Q(x)$, by identifying the outcome measurement $0$ with class $-1$.
        
        In parallel, we have the optimization task of finding the highest energy of $H$ by optimizing the parameters of $V(\vartheta)$, which is assumed to be efficiently solvable.

        Then as a supervised learning task, we consider a training set $S=\{(x_i,y_i)\}_{i=1}^N$, where $x_i\sim\calD_\calX(x)$ and $y_i=1$ if $Q(x)=1$, $y_i=-1$ if $Q(x)=0$.
        The hypothesis class $\calF_Q$ we consider is
        \begin{align}
            \calF_Q\coloneqq\{f_\vartheta(x)=\langle0\rvert V\dagg(x;\vartheta) \calM V(x;\vartheta)\lvert0\rangle\,|\,\vartheta\in\Theta\}
        \end{align}
        We consider the problem solved when we find a hypothesis specified by parameters $\vartheta_\text{sol}$ for which it holds that
        \begin{align}
            f(x;\vartheta_\text{sol}) = y(x)
        \end{align}
        with high probability over $x\sim\calD_\calX(x)$.
        Using the training set we have access to, and using the mean squared error, we are left with the following optimization problem:
        \allowdisplaybreaks
        \begin{align}
            \vartheta_\text{sol} &= \arg\min_\vartheta \{\frac{1}{2N} \sum_{i=1}^N \lVert f(x_i;\vartheta) - y_i\rVert^2\} \\
            &= \arg\min_\vartheta\{\frac{1}{2N}\sum_{i=1}^N \lVert\langle0\rvert V\dagg(x_i;\vartheta) \calM V(x_i;\vartheta) \lvert 0\rangle - y_i\rVert^2\}\\
            &= \arg\min_\vartheta\{\frac{1}{2N}\sum_{i=1}^N \lVert\langle0\rvert (U(x_i)\otimes W(\vartheta))\dagg (Z_1\otimes H)(U(x_i)\otimes W(\vartheta)) \lvert 0\rangle - y_i\rVert^2\}\\
            &= \arg\min_\vartheta\{\frac{1}{2N}\sum_{i=1}^N \lVert\langle0\rvert U(x_i)\dagg Z_1U(x_i)\lvert0\rangle\langle0\rvert W(\vartheta)\dagg H W(\vartheta) \lvert 0\rangle - y_i\rVert^2\} \\
            &= \arg\min_\vartheta\{\frac{1}{2N}\sum_{i=1}^N \lVert y_i\langle0\rvert W(\vartheta)\dagg H W(\vartheta) \lvert 0\rangle - y_i\rVert^2\} \\
            &= \arg\min_\vartheta\{\frac{1}{2N} \sum_{i=1}^N \lVert y_i (\langle0\rvert W(\vartheta)\dagg H W(\vartheta) \lvert 0\rangle- 1)\rVert^2\} \\
            &= \arg\min_\vartheta\{(\langle0\rvert W(\vartheta)\dagg H W(\vartheta) \lvert 0\rangle- 1)^2\} \\
            &= \arg\max_\vartheta \{\langle0\rvert W(\vartheta)\dagg H W(\vartheta) \lvert 0\rangle\} \\
            &= \vartheta\conj.
        \end{align}
        We see that the solution of our classification problem $\vartheta_\text{sol}$ is the same as the solution $\vartheta\conj$ of the optimization problem based only on $H$ and $W(\vartheta)$, which can be solved by the given training algorithm $\calA_W$ by assumption.
        Because we assumed the optimization problem to be efficiently solvable by gradient-based optimization, it follows that our binary classification problem is also efficiently solvable by gradient-based optimization.
        Here, similarly to Prop.~\ref{prop:BP+train}, we took a trainable model and added more ingredients to it to make it look difficult.
        But, in the end, training our model to fit data according to the function $Q$ is as easy as training the simple quantum state optimization task, so the model we proposed is gradient-based trainable.
        Still, the same task cannot be solved by a classical model because that would imply a classical algorithm being able to evaluate $Q(x)$, which we ruled out by assumption.
    \end{proof}

    \begin{customcorollary}{\ref{cor:varivery}}[Existence of trainable and non-dequantizable vari veryational QML model.]
        There exist vari veryational QML models which are gradient-based trainable and non-dequantizable with any number of layers up to sub-exponentially many in the number of qubits.
    \end{customcorollary}
    \begin{proof}
        We prove this statement by giving an explicit example of a vari veryational model that fulfills the assumptions of Theorem~\ref{thm:varivery}.
        The model corresponds to a tensor product of two PQCs, one corresponding to the hard computational task of evaluating $Q(x)$, and one corresponding to the easy optimization task corresponding to $H$ and $W(\vartheta)$.

        We start with the computational part, where we have the unitary $U(x)$ which implements the classically-hard function $Q(x)$.
        From $U(x)$, we must construct a layered circuit where all layers are equal, and which still produces $Q(x)$ when Pauli $Z$ is measured on the first qubit, at the end.
        To achieve this, consider an auxiliary register made of $t\in\bbN$ qubits and consider an integer-adder $t$-qubit unitary gate $A$ whose action on the computational basis is to add $1$ modulo $2^t$: $A\lvert b\rangle=\lvert b+1\bmod {2^t}\rangle$.
        Then, we define the $(n+t)$-qubit unitary $\tU(x)$ as a sequential two-step process:
        \begin{enumerate}
            \item Implement $U(x)$ on the first $n$ qubits conditioned on the extra $t$ qubits being in state $\lvert0\rangle$.
            \item Apply $A$ on the $t$ working qubits.
        \end{enumerate}
        With this, the action of $\tU(x)$ on the $(n+t)$ qubits is:
        \begin{align}
            \tU(x)\lvert 0\rangle\lvert b\rangle &= \left(U(x)^{\delta_{b,0}}\lvert0\rangle\right)\lvert b+1\mod{2^t}\rangle.
        \end{align}
        Here $\delta_{a,b}$ is the Kronecker delta, which equals $1$ if $a=b$, and is $0$ otherwise.
        In particular, for the first $n$ qubits, $\tU(x)$ applies $U(x)$ if $b=0$, and does nothing in any other case.
        Also, it follows that $\tU(x)^L\lvert0\rangle\lvert0\rangle = (U(x)^{\lceil L/(2^t)\rceil}\lvert 0\rangle)\lvert L\mod 2^t\rangle$.
        And, in particular, $\tU(x)^L\lvert0\rangle\lvert0\rangle = (U(x)\lvert 0\rangle)\lvert L\rangle$ for any $L<2^t$.
        That means, when using $\lvert0\rangle$ as the input state for both registers, applying $\tU(x)$ $L$ times in sequence has the same effect on the first $n$ qubits as only applying it once, given than $L$ is not exponential in $t$.
        So, provided the number of layers is subexponential $L<2^t$, it holds that
        \begin{align}
            \langle0\vert\!\langle0\vert(\tU^L(x))\dagg (Z_1\otimes\bbI)\tU^L(x)\lvert0\rangle\!\lvert0\rangle &= \langle0\vert\!\langle0\vert\tU(x)\dagg (Z_1\otimes\bbI)\tU(x)\lvert0\rangle\!\lvert0\rangle \\
            &= \langle0\rvert U(x)\dagg Z_1U(x)\lvert0\rangle \\
            &= Q(x).
        \end{align}

        For the easy optimization task, we could take any PQC-Hamiltonian combination that we know to be trainable and deep-layered.
        For this proof, it is enough to pick a very simple one.
        We can take the single-qubit observable $H=Z$, and the layered Ansatz $W(\vartheta) = \prod_{j=1}^L R_X(\vartheta_j)$.
        With these, it follows that initializing $\vartheta$ uniformly at random and then performing gradient descent is enough to solve the optimization task.
        Since the highest-energy state of $Z$ is already the initial state $\lvert0\rangle$, this optimization problem reduces to finding a specification of $\theta$ fulfilling $\sum_{j=1}^L\theta_j = 2\pi M$, for any $M\in\bbZ$, since $R_X$ forms a $U(1)$ group and $R_X(2\pi M)=\bbI$ up to a global phase for any $M\in\bbZ$.
        This optimization problem is easy regardless of $L$, so in particular it also works for $L<2^t$.

        All together, the PQC we are considering consists of $L$ layers uniformly repeated:
        \begin{align}
            \calF_Q &\coloneqq \{f_\vartheta(x)=\langle0\rvert V\dagg(x;\vartheta) \calM V(x;\vartheta)\lvert0\rangle\,|\,\vartheta\in\Theta\} \\
            V(x;\vartheta) &= \prod_{i=1}^L \tU(x) \otimes R_X(\vartheta_i).
        \end{align}
        We have seen that this PQC fulfills the assumptions of Theorem~\ref{thm:varivery} and it also fulfills the defining requirements for a QML model to be vari veryational from Section~\ref{ss:degrees}.
        This completes the proof.
    \end{proof}

    Both this construction and that of \citet{cerezo2023does} leverage a hard computational task to make a statement about trainability and dequantization.
    In both cases there is an encoding unitary which, upon measuring Pauli Z on the first qubit, evaluates a specific function of bitstrings to a single bit.
    The quantum function is chosen to ensure that no classical algorithm can evaluate it, even when given advice.

    The main difference is two-fold.
    On the one hand, we pose a learning problem and a model that can solve it, and not just a computational problem.
    At the same time we also propose a learning algorithm that ensures that the quantum algorithm actually solves the task.
    On the other hand, we force ourselves to work with so-defined vari veryational models, which rely on deep-layered variational circuits, as opposed to the circuit constructed in \citet{cerezo2023does}.
    That being said, a valid criticism of our construction is that the way in which we turn a computational task into a learning task is almost trivial.
    The trainable component of our circuit is detached from the encoding part, and it only appears as a multiplication factor independent of the input.
    Indeed, our proof does not educate us on how to design PQCs for QML, but rather it is the minimal construction necessary to prove the statement.

\end{document}